
\documentclass[a4paper,10pt]{article}

\usepackage[utf8]{inputenc}     

\usepackage{amsfonts}
\usepackage{amsthm}
\usepackage{amssymb}
\usepackage{amsmath}
\usepackage{bbm}
\usepackage{tikz}
\usetikzlibrary{arrows,automata}

\newtheorem{theorem}{Theorem}

\newtheorem{lemma}[theorem]{Lemma}
\newtheorem{proposition}[theorem]{Proposition}
\newtheorem{corollary}[theorem]{Corollary}

\newcommand{\tuple}[1]{\langle #1 \rangle}
\newcommand{\limrun}[0]{\infty}
\newcommand{\emptyword}[0]{\lambda}
\newcommand{\occ}[2]{|#1|_{#2}}
\newcommand{\prefsel}[0]{\upharpoonright}
\newcommand{\noprefsel}[0]{\upharpoonleft}
\newcommand{\prob}[0]{\operatorname{Prob}}
\newcommand{\trans}[1]{\mathchoice{\xrightarrow{#1}}{\xrightarrow{\smash{\lower1pt\hbox{$\scriptstyle #1$}}}}{\xrightarrow{#1}}{\xrightarrow{#1}}}
\newcommand{\weight}[0]{\operatorname{weight}}
\newcommand{\freq}[0]{\operatorname{freq}}

\DeclareUnicodeCharacter{D7}{\times}            
\DeclareUnicodeCharacter{393}{\Gamma}           
\DeclareUnicodeCharacter{394}{\Delta}           
\DeclareUnicodeCharacter{3A6}{\Phi}             
\DeclareUnicodeCharacter{3B1}{\alpha}           
\DeclareUnicodeCharacter{3B2}{\beta}            
\DeclareUnicodeCharacter{3B3}{\gamma}           
\DeclareUnicodeCharacter{3B4}{\delta}           
\DeclareUnicodeCharacter{3B5}{\varepsilon}      
\DeclareUnicodeCharacter{3B7}{\eta}             
\DeclareUnicodeCharacter{3B9}{\iota}            
\DeclareUnicodeCharacter{3BA}{\kappa}           
\DeclareUnicodeCharacter{3BB}{\lambda}          
\DeclareUnicodeCharacter{3BC}{\mu}              
\DeclareUnicodeCharacter{3BD}{\nu}              
\DeclareUnicodeCharacter{3BE}{\xi}              
\DeclareUnicodeCharacter{3C0}{\pi}              
\DeclareUnicodeCharacter{3C1}{\rho}             
\DeclareUnicodeCharacter{3C4}{\tau}             
\DeclareUnicodeCharacter{3C9}{\omega}           
\DeclareUnicodeCharacter{2026}{\ldots}          
\DeclareUnicodeCharacter{2115}{\mathbb{N}}      
\DeclareUnicodeCharacter{211A}{\mathbb{Q}}      
\DeclareUnicodeCharacter{211D}{\mathbb{R}}      
\DeclareUnicodeCharacter{212C}{\mathcal{B}}     
\DeclareUnicodeCharacter{2133}{\mathcal{M}}     
\DeclareUnicodeCharacter{2192}{\rightarrow}     
\DeclareUnicodeCharacter{2203}{\exists}         
\DeclareUnicodeCharacter{2208}{\in}             
\DeclareUnicodeCharacter{2209}{\notin}          
\DeclareUnicodeCharacter{2211}{\sum}            
\DeclareUnicodeCharacter{221E}{\infty}          
\DeclareUnicodeCharacter{222A}{\cup}            
\DeclareUnicodeCharacter{2260}{\neq}            
\DeclareUnicodeCharacter{2282}{\subset}         
\DeclareUnicodeCharacter{2286}{\subseteq}       
\DeclareUnicodeCharacter{228E}{\uplus}          
\DeclareUnicodeCharacter{2291}{\sqsubseteq}     
\DeclareUnicodeCharacter{22C3}{\bigcup}         
\DeclareUnicodeCharacter{22C5}{\cdot}           
\DeclareUnicodeCharacter{22EF}{\cdots}          
\DeclareUnicodeCharacter{27E8}{\langle}         
\DeclareUnicodeCharacter{27E9}{\rangle}         
\DeclareUnicodeCharacter{2A04}{\biguplus}       
\DeclareUnicodeCharacter{2A7D}{\leqslant}       
\DeclareUnicodeCharacter{2A7E}{\geqslant}       
\DeclareUnicodeCharacter{1D49C}{\mathcal{A}}    
\DeclareUnicodeCharacter{1D4A2}{\mathcal{G}}    
\DeclareUnicodeCharacter{1D4AE}{\mathcal{S}}    
\DeclareUnicodeCharacter{1D4AF}{\mathcal{T}}    
\DeclareUnicodeCharacter{1D7CF}{\mathbf{1}}     
\DeclareUnicodeCharacter{1D7D9}{\mathbbm{1}}    

\begin{document}

\title{Preservation of normality \\ by  unambiguous transducers}
\author{Olivier Carton}
\date{\today}
\maketitle

\begin{abstract}
  We consider finite state non-deterministic but unambiguous transducers
  with infinite inputs and infinite outputs, and we consider the property
  of Borel normality of sequences of symbols.  When these transducers are
  strongly connected, and when the input is a Borel normal sequence, the
  output is a sequence in which every block has a frequency given by a
  weighted automaton over the rationals.  We provide an algorithm that
  decides in cubic time whether a unambiguous transducer preserves
  normality.
\end{abstract}

\noindent Keywords: functional transducers, weighted automata, normal sequences

\section{Introduction}

More than one hundred years ago Émile Borel~\cite{Borel09} gave the
definition of \emph{normality}. A real number is normal to an integer base
if, in its infinite expansion expressed in that base, all blocks of digits
of the same length have the same limiting frequency.  Borel proved that
almost all real numbers are normal to all integer bases. However, very
little is known on how to prove that a given number has the property.

The definition of normality was the first step towards a definition of
randomness.  Normality formalizes the least requirements about a random
sequence.  It is indeed expected that in a random sequence, all blocks of
symbols with the same length occur with the same limiting frequency.
Normality, however, is a much weaker notion than the one of purely random
sequences defined by Martin-Löf \cite{DowneyHirschfeldt11}.

The motivation of this work is the study of transformations preserving
randomness, hence preserving normality.  The paper is focused on very
simple transformations, namely those that can be realized by finite-state
machines.  We consider automata with outputs, also known as sequential
transducers, mapping infinite sequences of symbols to infinite sequences of
symbols.  Input deterministic transducers were considered
in~\cite{CartonOrduna} where it was shown that preservation of normality
can be checked in polynomial time for these transducers.  This paper
extends the results to unambiguous transducers, that is, transducers where
each sequence is the input label of exactly one accepting run.  These
machines are of great importance because they coincide with functional
transducers in the following sense.  Each unambiguous transducer is indeed
functional as there is at most one output for each input but is was shown
conversely that each functional transducer is equivalent to some
unambiguous one \cite{ChoffrutGrigorieff99}.

An auxiliary result involving weighted automata is introduced to obtain the
main result. It states that if an unambiguous and strongly connected
transducer is fed with a normal sequence then the frequency of each block
in the output is given by a weighted automaton on rational numbers.  It
implies, in particular, that the frequency of each block in the output
sequence does not depend on the input nor the run labeled with it as long
as this input sequence is normal.  As the output of the run can be the used
transitions, the result shows that each finite run has a limiting frequency
in the run.

Our result result is connected to another strong link between normality and
automata.  Agafonov's theorem \cite{Agafonov68} states that if symbols are
selected in a normal sequence using an oblivious finite state machine, the
resulting sequence is still normal.  Oblivious means here that the choice
of selecting a symbol is based on the state of the machine after reading
the prefix of the sequence before the symbol but not including the symbol
it-self.  We show that our results allows us to recover Agafonov's theorem
about preservation of normality by selection. 

The paper is organized as follows.  Notions of normal sequences and
transducers are introduced in Section~\ref{sec:basic}.  Main results are
stated in Section~\ref{sec:results}.  Proofs of the results and algorithms
are given in Section~\ref{sec:proofs}.  The last section is devoted to
preservation of normality by selection.

\section{Basic Definitions} \label{sec:basic}

\subsection{Normality}

Before giving the formal definition of normality, let us introduce some
simple definitions and notation. Let $A$ be a finite set of \emph{symbols}
that we refer to as the \emph{alphabet}. We write $A^ℕ$ for the set of all
sequences on the alphabet~$A$ and $A^*$ for the set of all (finite) words.
Let us denote by~$μ$ the uniform measure on~$A^ℕ$.  The length of a finite
word $w$ is denoted by $|w|$.  The positions of sequences and words are
numbered starting from~$1$. To denote the symbol at position~$i$ of a
sequence (respectively, word) $w$ we write $w[i]$, and to denote the
substring of $w$ from position $i$ to $j$ inclusive we write $w[i{:}j]$.
The empty word is denoted by~$\emptyword$.  The cardinality of a finite
set~$E$ is denoted by~$\#E$.

Given two words $w$ and $v$ in $A^*$, the number $\occ{w}{v}$ of
occurrences of~$v$ in~$w$ is defined by
\begin{displaymath}
  \occ{w}{v} = \# \{ i : w[i{:} i+|v| - 1] = v \} .
\end{displaymath}
For example, $|abbab|_{ab} = 2$.  Given a word $w ∈ A^+$ and a sequence
$x ∈ A^ℕ$, we refer to the \emph{frequency of $w$ in $x$} as
\begin{displaymath}
   \freq(x,w) = \lim_{n→∞} \frac{\occ{x[1{:}n]}{w}}{n}
\end{displaymath}
when this limit is well-defined.

A sequence $x ∈ A^ℕ$ is \emph{normal} on the alphabet $A$ if for every word
$w ∈ A^*$:
\begin{displaymath}
  \freq(x,w) = \frac{1}{(\#A)^{|w|}}
\end{displaymath}

An occurrence of~$v$ is called \emph{aligned} if its starting position~$i$
(as above) is such that $i-1$ is a multiple of the length of~$v$.  An
alternative definition of normality can be given by counting aligned
occurrences, and it is well-known that they are equivalent (see for example
\cite{BecherCarton18}).  We refer the reader to \cite[Chap.4]{Bugeaud12}
for a complete introduction to normality.

The most famous example of a normal word is due to
Champernowne~\cite{Champernowne33}, who showed in 1933 that the infinite
word obtained from concatenating all the natural numbers (in their usual
order):
\begin{displaymath}
  0123456789101112131415161718192021222324252627282930\hdots
\end{displaymath}
is normal on the alphabet $\{0,1,…,9\}$.

\subsection{Automata and transducers}

In this paper we consider automata with outputs, also known as transducers.
We refer the reader to \cite{PerrinPin04} for a complete introduction to
automata accepting sequences.  Such finite-state machines are used to
realize functions mapping words to words and especially sequences to
sequences.  Each transition of these transducers consumes exactly one
symbol of their input and outputs a word which might be empty.  As many
reasoning ignore the outputs of the transitions, we first introduce
automata.

\begin{figure}[htbp]
  \begin{center}
  \begin{tikzpicture}[->,>=stealth',initial text=,semithick,auto,inner sep=1pt]
  \tikzstyle{every state}=[minimum size=0.4]
  \node[state,initial above,accepting] (q1) at (0,1.5) {$1$};
  \node[state] (q2) at (1.5,1.5) {$2$};
  \node[state] (q3) at (0,0) {$3$};
  \node[state] (q4) at (1.5,0) {$4$};
  \path (q1) edge[out=210,in=150,loop] node {$0$} (q1);
  \path (q1) edge [bend left=15] node {$1$} (q2);
  \path (q2) edge [bend left=15] node {$0$} (q1);
  \path (q1) edge node[swap] {$0$} (q3);
  \path (q3) edge [bend left=15] node {$1$} (q4);
  \path (q4) edge [bend left=15] node {$0,1$} (q3);
  \draw (q4) .. controls (1.1,0.75) and (0.4,0.75) .. node[swap,pos=0.35] {$1$} (q1);
  \end{tikzpicture}
  \end{center}
  \caption{An unambiguous automaton}
  \label{fig:unambiguous}
\end{figure}
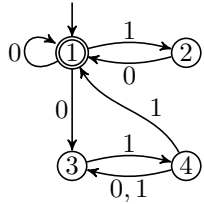

A \emph{(Büchi) automaton} $𝒜$ is a tuple $\tuple{Q,A,Δ,I,F}$ where $Q$ is
the state set, $A$ the alphabet, $Δ ⊆ Q × A × Q$ the transition relation,
$I ⊆ Q$ the set of initial states and $F$ is the set of final states.  A
transition is a tuple $⟨ p,a,q ⟩$ in $Q × A × Q$ and it is written
$p \trans{a} q$.  A \emph{finite run} in~$𝒜$ is a finite sequence of
consecutive transitions,
\begin{displaymath}
   q_0 \trans{a_1} q_1 \trans{a_2} q_2 ⋯ q_{n-1} \trans{a_n} q_n
\end{displaymath}
Its \emph{input} is the word $a_1 a_2 ⋯ a_n$.  An \emph{infinite run}
in~$𝒜$ is a sequence of consecutive transitions,
\begin{displaymath}
  q_0 \trans{a_1} q_1 \trans{a_2} q_2 \trans{a_3} q_3 ⋯ 
\end{displaymath}
A run is \emph{initial} if its first state~$q_0$ is initial, that is,
belongs to~$I$.  A run is called \emph{final} if it visits infinitely often
a final state.  Let us denote by $q \trans{x} \limrun$ the existence of a
final run labeled by~$x$ and starting from state~$q$.  An infinite run is
\emph{accepting} if it is both initial and final.  As usual, an automaton
is \emph{deterministic} if it has only one initial state, that is $\#I = 1$
and if $p \trans{a} q$ and $p \trans{a} q'$ are two of its transitions with
the same starting state and the same label, then $q = q'$.  An automaton is
called \emph{unambiguous} if each sequence is the label of at most one
accepting run.  By definition, deterministic automata are unambiguous but
they are not the only ones as it is shown by Figure~\ref{fig:unambiguous}.

Each automaton~$𝒜$ can be seen as a directed graph~$𝒢$ by ignoring the
labels of its transitions.  We define the \emph{strongly connected
  components} (SCC) of~$𝒜$ as the strongly connected components of~$𝒢$.  An
automaton~$𝒜$ is called \emph{strongly connected} if it has a single
strongly connected component.

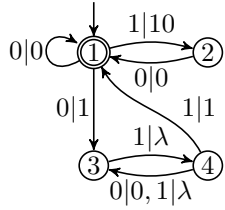
\begin{figure}[htbp]
  \begin{center}
  \begin{tikzpicture}[->,>=stealth',initial text=,semithick,auto,inner sep=1pt]
  \tikzstyle{every state}=[minimum size=0.4]
  \node[state,initial above,accepting] (q1) at (0,1.5) {$1$};
  \node[state] (q2) at (1.5,1.5) {$2$};
  \node[state] (q3) at (0,0) {$3$};
  \node[state] (q4) at (1.5,0) {$4$};
  \path (q1) edge[out=210,in=150,loop] node {$0|0$} (q1);
  \path (q1) edge [bend left=15] node {$1|10$} (q2);
  \path (q2) edge [bend left=15] node {$0|0$} (q1);
  \path (q1) edge node[swap] {$0|1$} (q3);
  \path (q3) edge [bend left=15] node {$1|\emptyword$} (q4);
  \path (q4) edge [bend left=15] node {$0|0,1|\emptyword$} (q3);
  \draw (q4) .. controls (1.1,0.75) and (0.4,0.75) .. node[swap,pos=0.25] {$1|1$} (q1);
  \end{tikzpicture}
  \end{center}
  \caption{An unambiguous transducer}
  \label{fig:transducer}
\end{figure}

A transducer with input alphabet~$A$ and output alphabet~$B$ is informally
an automaton whose labels of transitions are pairs $(a,v)$ in $A × B^*$.
The pair $(a,v)$ is usually written $a|v$ and a transition is thus written
$p \trans{a|v} q$. The symbol~$a$ and the word~$v$ are respectively called
the \emph{input label} and the \emph{output label} of the transition.  More
formally a \emph{transducer}~$𝒯$ is a tuple $⟨ Q,A,B, Δ,I, F⟩$, where $Q$
is a finite set of states, $A$ and $B$ are the input and output alphabets
respectively, $Δ ⊆ Q × A × B^* × Q$ is a finite transition relation and
$I ⊆ Q$ is the set of initial states and $F$ is the set of final states of
the Büchi acceptance condition.  The \emph{input automaton} of a transducer
is the automaton obtained by ignoring the output label of each transition.
The input automaton of the transducer pictured in
Figure~\ref{fig:transducer} is pictured in Figure~\ref{fig:unambiguous}.  A
transducer is called \emph{input deterministic} (respectively,
\emph{unambiguous}) if its input automaton is deterministic (respectively,
unambiguous).

A \emph{finite run} in~$𝒯$ is a finite sequence of consecutive transitions,
\begin{displaymath}
   q_0 \trans{a_1|v_1} q_1
       \trans{a_2|v_2} q_2
       ⋯
       q_{n-1}
       \trans{a_n|v_n} q_n
\end{displaymath}
Its \emph{input} and \emph{output labels} are the words
$a_1 a_2 ⋯ a_n$ and $v_1v_2⋯ v_n$ respectively.

An \emph{infinite run} in~$𝒯$ is an infinite
sequence of consecutive transitions,
\begin{displaymath}
   q_0 \trans{a_1|v_1} q_1
       \trans{a_2|v_2} q_2
       \trans{a_3|v_3} q_3
       ⋯
\end{displaymath}
Its \emph{input} and \emph{output labels} are the sequences of symbols
$a_1 a_2 a_3 ⋯$ and $v_1v_2v_3⋯$ respectively.

If $𝒯$ is a unambiguous transducer, each sequence~$x$ is the input label of
at most one accepting run in~$𝒯$.  When this run does exist, its output is
denoted by~$𝒯(x)$. We say that a unambiguous transducer~$𝒯$ \emph{preserves
  normality} if for each normal word~$x$, $𝒯(x)$ is also normal.

An automaton (respectively, transducer) is said to be \emph{trim} if each
state occurs in an accepting run.  Automata and transducers are always
assumed to be trim since useless states can easily be removed.

We end this section by stating very easy but useful facts about unambiguous
automata.  If $⟨ Q,A,B, Δ,I, F⟩$ is an unambiguous automaton then each
automaton $⟨ Q,A,B, Δ, \{ q \}, F⟩$ obtained by taking state~$q$ as initial
state is also unambiguous.  Similarly, removing states or transitions from
an unambiguous automaton yields an unambiguous automaton.  Combining these
two facts gives that each strongly connected component, seen as an
automaton, of an unambiguous automaton is still an unambiguous automaton.

\subsection{Weighted Automata}

We now introduce weighted automata.  In this paper we only consider
weighted automata whose weights are rational numbers with the usual
addition and multiplication (see \cite[Chap.~III]{Sakarovitch09a} for a
complete introduction).

A \emph{weighted automaton}~$𝒜$ is a tuple $⟨ Q,B,Δ,I,F ⟩$, where $Q$ is
the state set, $B$ is the alphabet, $I:Q→ ℚ$ and $F:Q→ ℚ$ are the functions
that assign to each state an initial and a final weight and
$Δ : Q × B × Q → ℚ$ is a function that assigns to each transition a weight.

As usual, the weight of a run is the product of the weights of its
transitions times the initial weight of its first state and times the final
weight of its last state.  Furthermore, the weight of a word $w ∈ B^*$ is
the sum of the weights of all runs with label~$w$ and it is denoted
$\weight_𝒜(w)$.

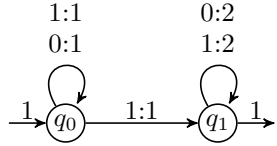
\begin{figure}[htbp]
  \begin{center}
  \begin{tikzpicture}[->,>=stealth',semithick,auto,inner sep=1.2pt]
  \tikzstyle{every state}=[minimum size=0.4]
  \node (qinitq0) at (-0.8,0) {} ;
  \node (qfinq1) at (2.8,0) {} ;
  \node[state] (q0) at (0,0) {$q_0$};
  \node[state] (q1) at (2,0) {$q_1$};
  \path (qinitq0) edge node {$1$} (q0);
  \path (q1) edge node {$1$} (qfinq1);
  \path (q0) edge node {$1{:}1$} (q1) ;
  \path (q0) edge[out=120,in=60,loop] node {$\begin{array}{c} 1{:}1 \\ 0{:}1 \end{array}$} (q0);
  \path (q1) edge[out=120,in=60,loop] node {$\begin{array}{c} 0{:}2 \\ 1{:}2 \end{array}$} (q1);
  \end{tikzpicture}
  \end{center}
  \caption{A weighted automaton}
  \label{fig:weighted} 
\end{figure}

A transition $p \trans{a} q$ with weight $x$ is pictured
$p \trans{a{:}x} q$.  Non-zero initial and final weights are given over
small incoming and outgoing arrows. A weighted automaton is pictured in
Figure~\ref{fig:weighted}. The weight of the run
$q_0 \trans{1} q_1 \trans{0} q_1 \trans{1} q_1 \trans{0} q_1$ is
$1 ⋅ 1 ⋅ 2 ⋅ 2 ⋅ 2 ⋅ 1 = 8$.  The weight of the word
$w = 1010$ is $8 + 2 = 10$.  More generally the weight of a
word~$w = a_1 ⋯ a_k$ is the integer~$n = ∑_{i=1}^k {a_i2^{k-i}}$
($w$ is a binary expansion of~$n$ with possibly some leading zeros).

A weighted automaton can also be represented by a triple $\tuple{π,μ,ν}$
where $π$ is a raw vector over~$ℚ$ of dimension $1 × n$, $μ$ is a morphism
from~$B^*$ into the set of $n × n$-matrices over~$ℚ$ and $ν$ is a column
vector of dimension $n × 1$ over~$ℚ$.  The weight of a word~$w ∈ B^*$ is
then equal to $πμ(w)ν$.  The vector~$π$ is the vector of initial weights,
the vector~$ν$ is the vector of final weights and, for each symbol~$b$,
$μ(b)$ is the matrix whose $(p,q)$-entry is the weight~$x$ of the
transition $p \trans{b:x} q$.  The weighted automaton pictured in
Figure~\ref{fig:weighted} is, for instance, represented by $\tuple{π,μ,ν}$
where
$π = (1,0)$, $ν = \left(\begin{smallmatrix} 0 \\
    1 \end{smallmatrix}\right)$ and the morphism~$μ$ is given by
\begin{displaymath}
  μ(0) =
  \left(
    \begin{array}{cc}
      1 & 0 \\
      0 & 2
    \end{array}
  \right)
  \quad\text{and}\quad
  μ(1) = 
  \left(
    \begin{array}{cc}
      1 & 1 \\
      0 & 2
    \end{array}
  \right).
\end{displaymath}

\section{Results} \label{sec:results}

We now state the main results of the paper.  The first one states that when
a transducer is strongly connected, unambiguous and complete, the frequency
of each finite word~$w$ in the output of a run with a normal input label is
given by a weighted automaton over~$ℚ$.  The second one states that it can
be checked whether an unambiguous transducer preserves normality.

\begin{theorem} \label{thm:weighted}
  Given an unambiguous and strongly connected transducer, there exists a
  weighted automaton~$𝒜$ such that for each normal sequence~$x$ in the
  domain of~$𝒯$ and for any finite word $w$, $\freq(𝒯(x),w)$ is equal to
  $\weight_𝒜(w)$.
  
  Furthermore, the weighted automaton~$𝒜$ can be computed in cubic
  time with respect to the size of the transducer~$𝒯$.
\end{theorem}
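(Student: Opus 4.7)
The plan is to proceed in three stages: first establish limiting state and transition frequencies of the run of $\mathcal{T}$ on a normal input, then decompose occurrences of any target block $w$ across these transitions, and finally package the resulting formula as a weighted automaton. For the first stage, I would write the unique accepting run on a normal sequence~$x$ as $q_0 \trans{a_1|v_1} q_1 \trans{a_2|v_2} q_2 \cdots$ and show that $\pi_q := \lim_{n} (1/n)\#\{i<n : q_i = q\}$ exists as a computable rational for each state~$q$. Because the input automaton is unambiguous and strongly connected, the state $q_n$ is determined by a bounded factor of $x$ around position~$n$; this lets me rewrite the indicator $\mathbbm{1}_{q_n=q}$ as the indicator of a finite-length factor of $x$, whose time-average under a normal $x$ coincides with the uniform-measure count of such factors. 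The $\pi_q$ then satisfy a linear system over $\mathbb{Q}$ derived from the adjacency matrix of the input automaton restricted to the ``surviving'' states.

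From state frequencies I would derive transition frequencies $\pi_t = \pi_p \cdot (\#A)^{-1} \cdot \alpha_t$ for each transition $t = p\trans{a|v}q$, with $\alpha_t\in\mathbb{Q}$ measuring the asymptotic fraction of occurrences of $(p,a)$ that select the successor~$q$, and the mean output length per input step as $\bar{L}=\sum_t \pi_t |v_t|$. Now fix $w\in B^*$. Each occurrence of $w$ in the output arises from a unique window of consecutive transitions $t_i,\ldots,t_j$ whose outputs factor as $w=s\cdot v_{i+1}\cdots v_{j-1}\cdot r$, with $s$ a suffix of~$v_i$ and $r$ a prefix of~$v_j$, and such a configuration contributes $\pi_{q_{i-1}}(\#A)^{-(j-i+1)}\prod_{k=i}^{j}\alpha_{t_k}$ to the count per input step. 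Dividing by $\bar{L}$ gives $\freq(\mathcal{T}(x),w)$ as a finite rational sum over these configurations.

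The weighted automaton $\mathcal{A}$ would then have states $(q,u)$ where $u$ is a suffix (possibly empty) of the output label of some transition of $\mathcal{T}$ leaving~$q$, so $|Q_\mathcal{A}|=O(|\mathcal{T}|)$. Its transitions consume one output symbol at a time, either by shortening the current suffix, $(q,bu')\trans{b}(q,u')$ with weight~$1$, or, when the suffix is empty, by initiating a fresh transducer transition $(q,\emptyword)\trans{b}(q',u'')$ with weight $\alpha_{t}(\#A)^{-1}$ summed over transducer transitions $t : q\trans{a|bu''}q'$; the $\varepsilon$-transitions coming from transducer transitions with empty output are removed by standard weighted $\varepsilon$-elimination. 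Initial weights $\pi_q/\bar{L}$ are placed on the start-of-transition states and final weights are constant~$1$; matching $\weight_\mathcal{A}(w)$ with the sum from stage two is then a direct check. The cubic bound comes from solving the $|Q|\times|Q|$ linear systems that produce $\pi_q$ and $\alpha_t$ in $O(|Q|^3)$, all other steps being linear in $|\mathcal{T}|$. I expect the main obstacle to be stage one: in the deterministic case the $\pi_q$ form a Perron eigenvector of a stochastic matrix, but in the merely unambiguous setting one must also compute each $\alpha_t$ from the asymptotic density of prefixes that survive to an accepting run, which is the principal novel content of the proof.
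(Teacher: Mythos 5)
Your overall architecture (state/transition frequencies, then a decomposition of occurrences of $w$ over windows of transitions, then packaging into a weighted automaton with suffix-states playing the role of the paper's normalization and $\varepsilon$-elimination playing the role of the matrix $E^*=(I-E)^{-1}$) matches the shape of the paper's proof. But stage one, which you yourself flag as the main obstacle, contains a genuine gap, and the one concrete idea you offer for it is false. In a merely unambiguous automaton the state $q_n$ of the unique accepting run is \emph{not} determined by any bounded factor of $x$ around position $n$: unambiguity is a global property, and the resolving information may lie arbitrarily far in the future. For the automaton of Figure~\ref{fig:unambiguous}, after reading a $0$ from state $1$ the run may be in state $1$ or in state $3$, and which of the two lies on the accepting run depends on unboundedly much lookahead (in the selector of Figure~\ref{fig:transducer2}, on whether the number of $0$s before the next $1$ is even, which no finite window decides). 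So $\mathbbm{1}_{q_n=q}$ cannot be rewritten as the indicator of a finite-length factor of $x$, and the reduction of state frequencies to factor frequencies of the normal input collapses. Likewise your $\alpha_t$, ``the asymptotic fraction of occurrences of $(p,a)$ that select the successor $q$,'' is defined as a limit whose existence is exactly what has to be proved, and you give no way to compute it; this is not a finishing detail but the principal content of the theorem in the unambiguous (as opposed to input-deterministic) case.

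The paper fills precisely this hole with different machinery: it takes $\alpha_q$ to be the uniform measure of the future set $F_q=\{x: q\trans{x}\limrun\}$, shows $\alpha$ is a right Perron eigenvector of the adjacency matrix (hence computable by linear algebra, and note the paper's warning that the adjacency matrix alone determines $\alpha$ only up to scaling), builds the Markov chain with transition probabilities $P_{p,q}=M_{p,q}\alpha_q/\alpha_p$ on the probability space $F_p$, and applies the ergodic theorem to get that for all but an $\varepsilon$-fraction of words $u$ of length $k$, \emph{every} run $p\trans{u}q$ has state-occurrence ratios close to $\pi_r\alpha_r$ (Lemma~\ref{lem:ergodic}). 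Only then does normality of $x$ enter, via a factorization of the run into blocks of length $k$ and the fact that all length-$k$ blocks occur in $x$ with frequency close to $(\#A)^{-k}$, yielding Proposition~\ref{pro:statefreq}; transition- and run-frequencies follow from the snake chain (Proposition~\ref{pro:runfreq}), giving the conditional frequency $\alpha_q/(\#A)^n\alpha_p$ that your $\alpha_t$ was meant to be. Without an argument of this kind (or some substitute establishing existence and computability of the limiting frequencies on a normal, not random, input), your proof does not go through; the remaining stages, which are broadly in line with the paper's construction of $E$, $D_b$, $E^*D_b$ and the stationary initial vector, cannot rescue it.
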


Theorem~\ref{thm:weighted} only deals with strongly connected transducers,
but Proposition~\ref{pro:decomposition} deals with the general case by
showing that it suffices to apply Theorem~\ref{thm:weighted} to some
strongly connected components to check preservation of normality.

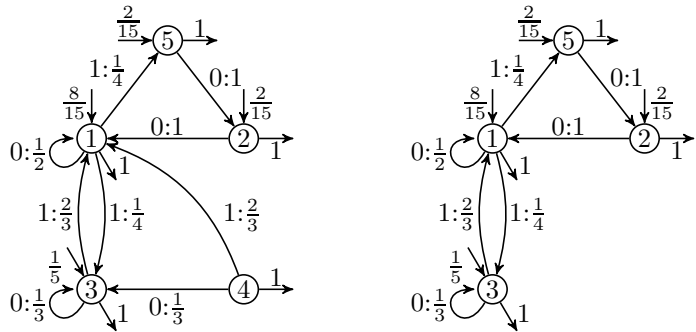
\begin{figure}[htbp]
  \begin{center}
  \begin{tikzpicture}[->,>=stealth',semithick,auto,inner sep=1pt]
    \begin{scope}
    \tikzstyle{every state}=[minimum size=0.4]
    \node[state] (q1) at (0,2) {$1$};
    \node[state] (q2) at (2,2) {$2$};
    \node[state] (q3) at (0,0) {$3$};
    \node[state] (q4) at (2,0) {$4$};
    \node[state] (q5) at (1,3.3) {$5$};
    \node (qini1) at (0,2.7) {};
    \node (qini2) at (2,2.7) {};
    \node (qini3) at (-0.35,0.6) {};
    \node (qini4) at (2,0.7) {};
    \node (qini5) at (0.3,3.3) {};
    \node (qfin1) at (0.35,1.4) {};
    \node (qfin2) at (2.7,2) {};
    \node (qfin3) at (0.35,-0.6) {};
    \node (qfin4) at (2.7,0) {};
    \node (qfin5) at (1.7,3.3) {};
    \path (qini1) edge node[swap,pos=0.4] {$\frac{8}{15}$} (q1);
    \path (qini2) edge node[pos=0.4] {$\frac{2}{15}$} (q2);
    \path (qini3) edge node[swap,pos=0] {$\frac{1}{5}$} (q3);
    \path (qini5) edge node[pos=0.3] {$\frac{2}{15}$} (q5);
    \path (q1) edge node[pos=0.9] {$1$} (qfin1);
    \path (q2) edge node[swap] {$1$} (qfin2);
    \path (q3) edge node[pos=0.8] {$1$} (qfin3);
    \path (q4) edge node {$1$} (qfin4);
    \path (q5) edge node {$1$} (qfin5);
    \path (q1) edge[out=240,in=180,loop] node[pos=0.62] {$0{:}\frac{1}{2}$} (q1);
    \path (q1) edge node {$1{:}\frac{1}{4}$} (q5);
    \path (q5) edge node {$0{:}1$} (q2);
    \path (q2) edge node[swap] {$0{:}1$} (q1);
    \path (q1) edge[bend left=15] node {$1{:}\frac{1}{4}$} (q3);
    \path (q3) edge[bend left=15] node {$1{:}\frac{2}{3}$} (q1);
    \path (q3) edge[out=240,in=180,loop] node[pos=0.62] {$0{:}\frac{1}{3}$} (q1);
    \path (q4) edge node {$0{:}\frac{1}{3}$} (q3);
    \path (q4) edge[bend right=25] node[swap,pos=0.22] {$1{:}\frac{2}{3}$} (q1);
    \end{scope}
    \begin{scope}[xshift=150]
    \tikzstyle{every state}=[minimum size=0.4]
    \node[state] (q1) at (0,2) {$1$};
    \node[state] (q2) at (2,2) {$2$};
    \node[state] (q3) at (0,0) {$3$};
    \node[state] (q5) at (1,3.3) {$5$};
    \node (qini1) at (0,2.7) {};
    \node (qini2) at (2,2.7) {};
    \node (qini3) at (-0.35,0.6) {};
    \node (qini5) at (0.3,3.3) {};
    \node (qfin1) at (0.35,1.4) {};
    \node (qfin2) at (2.7,2) {};
    \node (qfin3) at (0.35,-0.6) {};
    \node (qfin5) at (1.7,3.3) {};
    \path (qini1) edge node[swap,pos=0.4] {$\frac{8}{15}$} (q1);
    \path (qini2) edge node[pos=0.4] {$\frac{2}{15}$} (q2);
    \path (qini3) edge node[swap,pos=0] {$\frac{1}{5}$} (q3);
    \path (qini5) edge node[pos=0.3] {$\frac{2}{15}$} (q5);
    \path (q1) edge node[pos=0.9] {$1$} (qfin1);
    \path (q2) edge node[swap] {$1$} (qfin2);
    \path (q3) edge node[pos=0.8] {$1$} (qfin3);
    \path (q5) edge node {$1$} (qfin5);
    \path (q1) edge[out=240,in=180,loop] node[pos=0.62] {$0{:}\frac{1}{2}$} (q1);
    \path (q1) edge node {$1{:}\frac{1}{4}$} (q5);
    \path (q5) edge node {$0{:}1$} (q2);
    \path (q2) edge node[swap] {$0{:}1$} (q1);
    \path (q1) edge[bend left=15] node {$1{:}\frac{1}{4}$} (q3);
    \path (q3) edge[bend left=15] node {$1{:}\frac{2}{3}$} (q1);
    \path (q3) edge[out=240,in=180,loop] node[pos=0.62] {$0{:}\frac{1}{3}$} (q1);
    \end{scope}
  \end{tikzpicture}
  \end{center}
  \caption{Two weighted automata}
  \label{fig:weighted2}
\end{figure}

To illustrate Theorem~\ref{thm:weighted} we give in
Figure~\ref{fig:weighted2} two weighted automata which compute the
frequency of each finite word~$w$ in~$𝒯(x)$ for a normal input~$x$ and the
transducer~$𝒯$ pictured in Figure~\ref{fig:transducer}.  The leftmost one
is obtained by the procedure described in the next section. The rightmost
one is obtained by removing useless states from the leftmost one.

\begin{theorem} \label{thm:preservation}
  It can be decided in cubic time whether an unambiguous transducer
  preserves normality or not.
\end{theorem}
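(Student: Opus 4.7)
The plan is to reduce Theorem~\ref{thm:preservation} to Theorem~\ref{thm:weighted} via Proposition~\ref{pro:decomposition}. By that decomposition, preservation of normality for an unambiguous transducer~$𝒯$ is equivalent to a property that depends only on a distinguished finite family of strongly connected components of~$𝒯$ (intuitively those in which an accepting run on a normal input can ultimately stay): namely, each associated weighted automaton produced by Theorem~\ref{thm:weighted} must compute the uniform Bernoulli measure on blocks of the output alphabet~$B$.

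Concretely, I would first compute the strongly connected components of the input automaton of~$𝒯$ in linear time and identify those targeted by Proposition~\ref{pro:decomposition}. For each such SCC~$𝒞$, I invoke Theorem~\ref{thm:weighted} to build in time cubic in~$|𝒞|$ a weighted automaton~$𝒜_𝒞$ satisfying $\weight_{𝒜_𝒞}(w) = \freq(𝒯(x),w)$ for every word $w ∈ B^*$, whenever $x$ is a normal input whose accepting run is eventually trapped in~$𝒞$.

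Then $𝒯$ preserves normality if and only if, for every relevant~$𝒞$ and every $w ∈ B^*$, one has $\weight_{𝒜_𝒞}(w) = (1/\#B)^{|w|}$. This amounts to testing equivalence, over~$ℚ$, between $𝒜_𝒞$ and the trivial one-state weighted automaton with initial and final weights both equal to~$1$ and a single loop of weight $1/\#B$ for each output letter. Equivalence of two $ℚ$-weighted automata with $m$ and $n$ states, respectively, is decidable in time $O((m+n)^3)$ by the classical forward-basis construction (Schützenberger, Tzeng): one incrementally computes a basis of the subspace of $ℚ^{m+n}$ spanned by the stacked vectors $\tuple{\pi_1\mu_1(w),\pi_2\mu_2(w)}$ for $w$ ranging over~$B^*$, and then tests that each basis vector pairs to zero against the difference of the two final-weight vectors.

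The main obstacle is fitting everything inside a cubic budget. Building $𝒜_𝒞$ costs $O(|𝒞|^3)$, and the subsequent equivalence test adds another $O(|𝒞|^3)$; since the relevant SCCs are disjoint subgraphs of~$𝒯$, one has $\sum_𝒞 |𝒞|^3 \leq |𝒯|^3$, so the total cost remains $O(|𝒯|^3)$. Correctness is then immediate in both directions: if $𝒯$ preserves normality then $\freq(𝒯(x),w) = (1/\#B)^{|w|}$ for every suitable normal~$x$, forcing each $𝒜_𝒞$ to agree with the trivial automaton on all words, while conversely the equivalence of every relevant $𝒜_𝒞$ with the trivial automaton combined with Proposition~\ref{pro:decomposition} ensures that every normal input is sent to a normal output.
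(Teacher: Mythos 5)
Your proposal follows essentially the same route as the paper: decompose via Proposition~\ref{pro:decomposition} into the strongly connected components with a final state and spectral radius~$1$, build for each one the weighted automaton of Theorem~\ref{thm:weighted}, and test its equivalence with the one-state automaton computing $w \mapsto (1/\#B)^{|w|}$ by the Schützenberger--Tzeng algorithm, with the same per-component cubic accounting. This matches the paper's algorithm and complexity analysis (the paper writes the target weight with $\#A$, but its comparison automaton loops over the output letters, so your formulation is the intended one), so the proposal is correct.
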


From the weighted automaton pictured in Figure~\ref{fig:weighted2},
it is easily computed that the limiting frequencies of the digits $0$
and~$1$ in the output~$𝒯(x)$ of a normal input~$x$ are respectively
$9/15$ and $6/15$.  This shows that the transducer~$𝒯$ pictured in
Figure~\ref{fig:transducer} does not preserve normality.

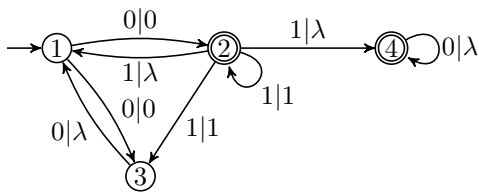
\begin{figure}[htbp]
  \begin{center}
  \begin{tikzpicture}[->,>=stealth',initial text=,semithick,auto,inner sep=1pt]
  \tikzstyle{every state}=[minimum size=0.4]
  \node[state,initial left] (q1) at (0,1.7) {$1$};
  \node[state,accepting] (q2) at (2.2,1.7) {$2$};
  \node[state] (q3) at (1.1,0) {$3$};
  \node[state,accepting] (q4) at (4.4,1.7) {$4$};
  \path (q1) edge[bend left=10] node {$0|0$} (q2);
  \path (q2) edge[out=-15,in=-75,loop] node {$1|1$} (q2);
  \path (q2) edge[bend left=10] node {$1|\emptyword$} (q1);
  \path (q1) edge[bend left=10] node[pos=0.7] {$0|0$} (q3);
  \path (q3) edge[bend left=10] node {$0|\emptyword$} (q1);
  \path (q2) edge node {$1|\emptyword$} (q4);
  \path (q2) edge node {$1|1$} (q3);
  \path (q4) edge[out=30,in=-30,loop] node {$0|\emptyword$} (q4);
  \end{tikzpicture}
  \end{center}
  \caption{Another unambiguous transducer}
  \label{fig:transducer2}
\end{figure}

To illustrate the previous theorem, we show that the transducer
pictured in Figure~\ref{fig:transducer2} is unambiguous and does preserve
normality.  It is actually a selector as defined below in
Section~\ref{sec:selection} because the output of each transition is either
the input symbol or the empty word.  Therefore, the output is always a
subsequence of the input sequence.  It can be checked that a symbol is
selected, that is copied to the output, if the number of~$0$ until the
next~$1$ is finite and even, including zero.

By Proposition~\ref{pro:decomposition} below, it suffices to check that the
strongly connected component made of the states $\{ 1, 2, 3\}$ does
preserve normality.  The weighted automaton given by the algorithm is
represented by the triple $\tuple{π,μ,𝟏}$ where $π$ is the raw vector
$π = (3/4,1/4)$, $𝟏$ is the column vector
$\left(\begin{smallmatrix} 1 \\ 1 \end{smallmatrix}\right)$ and the
morphism~$μ$ is defined by
\begin{displaymath}
  μ(0) =
  \left(
    \begin{array}{cc}
      1/4 & 1/12 \\
      3/4 & 1/4
    \end{array}
  \right)
  \quad\text{and}\quad
  μ(1) = 
  \left(
    \begin{array}{cc}
      1/2 & 1/6 \\
      0 & 0
    \end{array}
  \right).
\end{displaymath}
The vector~$π$ satisfies $πμ(0) = πμ(1) = \frac{1}{2}π$ and therefore
$πμ(w)𝟏$ is equal to $2^{-|w|}$ for each word~$w$.  This shows that the
transducer pictured in Figure~\ref{fig:transducer2} does preserve
normality.

\section{Adjacency matrix of the automaton}

In this section, we introduce the adjacency matrix of an automaton.  This
matrix is particularly useful when the automaton is strongly connected and
unambiguous. Its spectral radius characterizes the fact that the automaton
does accept or not a normal sequence as stated in
Proposition~\ref{pro:acceptnormal} below.

Let $𝒜$ be an automaton with state set~$Q$.  The \emph{adjacency matrix}
of~$𝒜$ is the $Q × Q$-matrix~$M$ defined by
$M_{p,q} = \#\{ a ∈ A: p \trans{a} q\}/\#A$.  Its entry $M_{p,q}$ is thus
the number of transitions from~$p$ to~$q$ divided by the cardinality of the
alphabet~$A$.  The factor~$1/\#A$ is just a normalization factor to compare
the spectral radius of this matrix to~$1$ rather than to the cardinality of
the alphabet.  By a slight abuse of notation, the spectral radius of the
adjacency matrix, is called the spectral radius of the automaton.

The adjacency matrix of the unambiguous automaton pictured in
Figure~\ref{fig:unambiguous} is the matrix $M$ given by
\begin{displaymath}
  M = \frac{1}{2}
  \left(
    \begin{array}{cccc}
      1 & 1 & 1 & 0 \\
      1 & 0 & 0 & 0 \\
      0 & 0 & 0 & 1 \\
      1 & 0 & 2 & 0
    \end{array}
  \right)
\end{displaymath}
It can be checked that the spectral radius of this matrix is~$1$.

We implicitly suppose that the automaton~$𝒜$ has at least one initial state
and one final state.  Otherwise, no sequence is accepted by it and nothing
interested can be said about it.  For each state~$q$, let $F_q$ be the
\emph{future} set, that is the set $F_q = \{ x : q \trans{x} \limrun \}$ of
accepted sequences if $q$ is taken as the only initial state. Let $α_q$ be
the measure of the set~$F_q$.  Note that the sum $∑_{q ∈ Q}{α_q}$ might be
greater than~$1$ because the sets~$F_q$ might not be pairwise disjoint.  We
claim that the vector $α = (α_q)_{q ∈ Q}$ satisfies $Mα = α$.  This
equality means that either $α$ is the zero vector or that $α$ is a right
eigenvector of~$M$ for the eigenvalue~$1$.  This equality comes from the
following relations between the sets~$F_q$.  For each state~$p$, one has
\begin{displaymath}
  F_p = ⨄_{p \trans{a} q} aF_q
\end{displaymath}
where the symbol~$⊎$ stands for the union of pairwise disjoint sets.
The fact that $F_p$ is equal to the union of the sets $aF_q$ for $a$ and
$q$ ranging over all possible transitions $p \trans{a} q$ is true in any
automaton accepting sequences.  Furthermore, the unambiguity of~$𝒜$ implies
that the sets~$aF_q$ for different pairs $(a,q)$ must be pairwise disjoint.
Therefore, for each state~$p$,
\begin{displaymath}
  \alpha_p = μ(F_p) = \sum_{p \trans{a} q}  μ(aF_q)
           = \frac{1}{\#Q}\sum_{p \trans{a} q}  α_q.
\end{displaymath}

By definition, the adjacency matrix is non-negative.  By the
Perron-Frobenius theorem, its spectral radius must be one of its
eigenvalues.  The following lemma states that if the automaton~$𝒜$ is
strongly connected and unambiguous, then the spectral radius of~$M$ is less
than~$1$.
\begin{lemma}
  Let $𝒜$ be a strongly connected and unambiguous automaton.  The
  maximum eigenvalue of its adjacency matrix~$M$ is less than~$1$.
\end{lemma}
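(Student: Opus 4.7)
The plan is to bound the operator norm $\|M^n\|_\infty$ uniformly in $n$, so that Gelfand's spectral radius formula gives $\rho(M) \le 1$; the strict inequality then has to be extracted from the identity $M\alpha = \alpha$ already derived in the preceding discussion.

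First I would prove the following combinatorial consequence of unambiguity in a trim automaton: any two distinct length-$n$ paths from a common state $p$ with a common input label $w$ end at distinct states. If two such paths $\pi_1 \ne \pi_2$ both went from $p$ to the same state $q$, trimness would give $F_q \ne \emptyset$, and extending both $\pi_i$ by any accepting continuation from $q$ labeled $y \in F_q$ would produce two distinct accepting runs from $p$ with common input $wy$. This contradicts the observation at the end of the preceding subsection that declaring $\{p\}$ initial preserves unambiguity.

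This combinatorial statement immediately bounds path counts: for each state $p$ and each word $w \in A^n$ there are at most $\#Q$ length-$n$ paths from $p$ labeled $w$, so the total number of length-$n$ paths from $p$ is at most $\#Q \cdot (\#A)^n$. Dividing by the normalisation factor $(\#A)^n$ gives $(M^n \mathbf{1})_p \le \#Q$ for every $p$ and every $n$, hence $\|M^n\|_\infty \le \#Q$ uniformly in $n$. Gelfand's formula then yields $\rho(M) \le \lim_n (\#Q)^{1/n} = 1$.

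The main obstacle is upgrading this to the strict inequality. By Perron--Frobenius applied to the irreducible non-negative matrix $M$, equality $\rho(M) = 1$ would force the existence of a strictly positive right eigenvector for eigenvalue $1$; conversely, the identity $M\alpha = \alpha$ already yields such a vector whenever $\alpha \ne 0$, since strong connectivity propagates any non-zero entry of $\alpha$ to all entries. Strictness therefore reduces to showing $\alpha \equiv 0$, that is, that every future set $F_q$ has measure zero. This is the step I would handle last: the combinatorial bound above has enough slack to be sharpened once $\alpha \equiv 0$ is assumed, because for every $p$ and every word $w$ the sum $\sum_{q\,:\,p \trans{w} q} \alpha_q$ collapses to $0$, and feeding this improvement into a well-chosen iterate $M^n$ should push the operator norm strictly below $1$.
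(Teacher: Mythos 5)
Your first three paragraphs are correct and follow essentially the same route as the paper. The paper likewise bounds the entries of $(\#A\cdot M)^n$ by counting labelled runs, using the fact that in a trim unambiguous automaton each word labels at most one run from $p$ to $q$; it then concludes $\lambda\leqslant 1$ from the two-sided Perron--Frobenius entry estimates for the primitive blocks of $M^p$, whereas you invoke Gelfand's formula on $\|M^n\|_\infty\leqslant \#Q$ --- a cosmetic difference, and arguably cleaner. Your explicit justification of the counting step (two coincident runs from $p$ to $q$ with the same label, extended by a final run from $q$ supplied by trimness, would violate unambiguity of the automaton with $\{p\}$ initial) is exactly the argument the paper leaves implicit, and it is sound.

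The last paragraph, however, is a wrong turn. The lemma does not assert, and cannot assert, the strict inequality $\rho(M)<1$: the paper's own proof ends with ``this proves that $\lambda\leqslant 1$'', and the immediately following Proposition~\ref{pro:acceptnormal} distinguishes the cases $\lambda=1$ and $\lambda<1$, showing that $\lambda=1$ occurs precisely when $\mathcal{A}$ accepts a normal sequence --- so ``less than'' must be read as $\leqslant$. Concretely, the unambiguous automaton of Figure~\ref{fig:unambiguous}, or any complete deterministic automaton, has spectral radius exactly $1$. Your proposed route to strictness, via $\alpha\equiv 0$, is therefore hopeless: when $\lambda=1$ the vector $\alpha$ is strictly positive (again Proposition~\ref{pro:acceptnormal}), and no sharpening of the counting bound will push $\|M^n\|_\infty$ below $1$. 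Delete the final paragraph; the proof is already complete without it.
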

\begin{proof}
  Consider the matrix $\#A ⋅ M$.  Its $(p,q)$-entry is the number of
  transitions from~$p$ to~$q$.  It follows that the $(p,q)$-entry of
  $(\#A ⋅ M)^n$ is the number of runs of length~$n$ from~$p$ to~$q$.  Since
  $𝒜$ is unambiguous, each finite word is the label of at most one run
  from~$p$ to~$q$.  This yields that the entry $(\#A ⋅ M)^n_{p,q}$ is
  bounded by the number $(\#A)^n$ of words of length~$n$ and that each
  entry of~$M^n$ is bounded by~$1$.

  Since the automaton~$𝒜$ is strongly connected, the matrix~$M$ is positive
  and irreducible.  Let $λ$ be its spectral radius which is a positive real
  number.  Suppose that the period of~$M$ is the positive integer~$p$.  By
  Theorem~1.4 in~\cite{Senata06}, The matrix $M^p$ can be decomposed as
  diagonal blocks of primitive matrices and at least one of this block~$M'$
  has $λ^p$ as eigenvalue.  For a positive matrix~$M'$, there exists, by
  Theorem~1.2 in~\cite{Senata06}, a constant~$K$ such that each entry of
  the matrix~$M^{\prime n}$ satisfies $λ^n/K ⩽ M^{\prime n}_{p,q} ⩽ Kλ^n$.
  This proves that $λ ⩽ 1$.
\end{proof}

By the previous lemma, the spectral radius of the adjacency matrix of an
unambiguous automaton is less than~$1$.  The following proposition states
when it is equal to~$1$ or strictly less than~$1$.
\begin{proposition} \label{pro:acceptnormal}
  Let $𝒜$ be a strongly connected and unambiguous automaton and let $λ$ be
  the spectral radius of its adjacency matrix. If $λ = 1$ then $𝒜$ accepts
  at least one normal sequence and each number~$α_q$ is positive.  If
  $λ < 1$, then $𝒜$ accepts no normal sequence and each number~$α_q$ is
  equal to zero.
\end{proposition}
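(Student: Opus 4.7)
The relation $M\alpha = \alpha$ established in the preceding discussion already settles the measure claim in one case: when $\lambda < 1$, the value~$1$ is not an eigenvalue of~$M$, so $\alpha = 0$. When $\lambda = 1$, Perron--Frobenius for the irreducible non-negative matrix~$M$ yields a strictly positive right eigenvector for the eigenvalue~$1$, so once $\alpha \neq 0$ is established the conclusion that every $\alpha_q$ is positive is automatic. The substantive claims to prove are thus (i) that no normal sequence is accepted when $\lambda < 1$, and (ii) that $\alpha$ is non-zero and some normal sequence is accepted when $\lambda = 1$.

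For~(i) my plan is a counting argument based on the equivalent formulation of normality through aligned occurrences. Let $L_n$ denote the set of words of length~$n$ that label some finite run of~$𝒜$ starting at some state. By unambiguity $|L_n| \leq \sum_{q,r} (\#A \cdot M)^n_{q,r}$, and since $\rho(M) = \lambda < 1$ this forces $|L_n|/(\#A)^n \to 0$. If $x$ were a normal sequence accepted by~$𝒜$ via an accepting run~$\rho$, then every aligned block $x[(k{-}1)n{+}1{:}kn]$ would be the input label of a sub-run of~$\rho$ starting at some state, hence would lie in~$L_n$. Aligned normality then forces the frequency of $L_n$-blocks among the aligned blocks of~$x$ to equal $|L_n|/(\#A)^n$, whereas by construction this frequency is~$1$, a contradiction for~$n$ sufficiently large.

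For~(ii) my plan proceeds in two measure-theoretic steps. Introduce $A_n(q) = \{x : x[1{:}n] \text{ labels some length-}n \text{ run starting at } q\}$, a decreasing sequence whose intersection $A_\infty(q)$ is, by König's lemma, the set of sequences carrying at least one infinite run from~$q$. Unambiguity gives $\mu(A_n(q)) = (M^n \mathbf{1})_q$, a monotone decreasing sequence whose limit coincides with its Cesàro limit; Perron--Frobenius for the irreducible matrix~$M$ of spectral radius~$1$ identifies the latter as a positive multiple of~$v_q$, where $v$ is the Perron right eigenvector, giving $\mu(A_\infty(q)) > 0$. To pass from $A_\infty(q)$ to $F_q$, note that every sequence in $A_\infty(q) \setminus F_q$ carries an infinite run eventually confined to $Q \setminus F$. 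The adjacency submatrix on $Q \setminus F$ is a proper principal submatrix of the irreducible matrix~$M$ and therefore, by the classical strict spectral inequality, has spectral radius strictly less than~$1$; the counting argument of~(i) applied to this sub-automaton shows that the exceptional set has measure zero. Hence $\alpha_q = \mu(F_q) = \mu(A_\infty(q)) > 0$. As the set of normal sequences has full $\mu$-measure, intersecting it with $F_{q_0}$ for any initial state~$q_0$ produces a normal sequence accepted by~$𝒜$.

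The main obstacle is the two-layered measure computation in case~(ii): one needs both a positive lower bound on $\mu(A_\infty(q))$ via Perron--Frobenius and the elimination of infinite runs failing the Büchi condition via the strict spectral inequality for proper principal submatrices of an irreducible non-negative matrix. The $\lambda < 1$ case, by comparison, reduces to a single clean combinatorial obstruction using aligned normality.
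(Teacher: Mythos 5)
Your argument has one genuine gap, and it sits exactly at the load-bearing step of case~(ii): the claim that ``unambiguity gives $\mu(A_n(q)) = (M^n \mathbf{1})_q$'' is false. Unambiguity (together with trimness) only guarantees that a finite word labels at most one run between a \emph{fixed ordered pair} of states; from a fixed starting state a word may label several runs with distinct endpoints. The paper's own Figure~\ref{fig:unambiguous} is a counterexample: from state~$1$ the word $0$ labels two runs, one ending in state~$1$ and one in state~$3$, so $\mu(A_1(1)) = 1$ while $(M\mathbf{1})_1 = 3/2$. Consequently the identification of $\lim_n \mu(A_n(q))$ with a positive multiple of the Perron eigenvector entry $v_q$ does not stand as written, and since the positivity of $\alpha$, and hence also your final step (intersecting $F_{q_0}$ with the full-measure set of normal sequences to produce an accepted normal sequence), rests on it, the whole of case~(ii) is currently unsupported. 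The $\lambda<1$ case and the elimination of non-final runs via the strict spectral inequality for the principal submatrix on $Q\setminus F$ (plus shift-invariance of $\mu$) are fine.

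The gap is repairable within your framework: from a fixed state $q$ a word labels at most one run per endpoint, hence at most $\#Q$ runs, so $\mu(A_n(q)) \geqslant (M^n\mathbf{1})_q/\#Q$; and with the Perron right eigenvector $v>0$ one has $M^n\mathbf{1} \geqslant M^n v/\|v\|_\infty = v/\|v\|_\infty$, giving the uniform positive lower bound $\mu(A_\infty(q)) \geqslant v_q/(\#Q\,\|v\|_\infty) > 0$. With that patch your route is correct and genuinely different from the paper's: the paper proves via an entropy argument that every finite word labels some run, extracts a run over the given normal sequence and surgically turns it into an accepting run (prepending a prefix and inserting sparse detours to a final state), and obtains positivity of $\alpha$ by deleting, for each state $p$, the transitions leaving $p$ and comparing spectral radii; your version replaces the explicit sequence surgery by a purely measure-theoretic argument ($\mu(F_q)>0$ plus full measure of normal sequences), at the price of needing the corrected counting bound above rather than the exact run-count identity.
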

\begin{proof}
  Let $M$ be the adjacency matrix of~$𝒜$.  Suppose first that its spectral
  radius satisfies $λ < 1$.  The number of runs of length~$n$ is equal to
  the sum $(\#A)^n∑_{p,q ∈ Q} M^n_{p,q}$ where $M^n_{p,q}$ is the
  $(p,q)$-entry of~$M^n$.  By the Perron-Frobenius theorem, there exists a
  constant~$K$ such that $M^n_{p,q} ⩽ Kλ^n$ for each $p,q ∈ Q$. It follows
  that the number of words which are the label of some run in~$𝒜$ is less
  than $K(\#Q)^2(λ\#A)^n$.  For $n$ great enough, $K(\#Q)^2λ^n$ is strictly
  less than~$1$ and some word of length~$n$ is the label of no run in~$𝒜$.
  This implies that no normal sequence can be accepted by~$𝒜$.  The
  equality $Mα = α$ shows that $α = 0$ since $1$ is not an eigenvalue
  of~$M$.

  We now suppose that the spectral radius~$λ$ of~$M$ is~$1$.  The entropy
  of the sofic shift defined by~$𝒜$ is $\log_2 \#A$, each finite word is
  the label of at least one run in~$𝒜$.  Otherwise, the entropy of the
  sofic shift would be strictly less than~$\log_2 \#A$.  Let
  $x = a_1a_2a_3 ⋯ $ be a normal sequence.  Each prefix $a_1 ⋯ a_n$ of~$x$
  is the label of a run in~$𝒜$.  By extraction, we get a run whose label
  is~$x$.  Note that this run might be neither initial nor final.  To get
  an initial run we consider the new sequence~$ux$ where $u$ is the label
  of a run from an initial state to the starting state of the run labelled
  by~$x$.  To get a final run, we insert in~$x$ at positions of the
  form~$2^k$ a word of length at most $2\#Q$ to make a small detour to a
  final state of~$𝒜$.  Since the inserted blocks have bounded lengths and
  they are inserted at sparse positions, the obtained sequence is still
  normal.  We now prove that $α$ is positive.  We claim that almost all
  sequences are the label of a run visiting infinitely often each state.
  It suffices to prove that for each state~$p$, almost all sequences are
  the label of a run visiting infinitely often~$p$.  Let $𝒜'$ be the
  automaton obtained by removing all transitions starting from~$p$.  By
  Theorem~1.5e in~\cite{Senata06}, the spectral radius of the adjacency
  matrix of~$𝒜'$ is strictly less than~$1$.  Therefore, by the previous
  case, the measures~$α'_q$ of the sets
  $F'_q = \{ x : q \trans{x} ∞ \text{ in }𝒜'\}$ are equal to zero.  This
  proves that the set of sequences which are the label of a run never
  visiting~$p$ has measure~$0$.  By the same reasoning, it can be shown
  that the set of sequences which are the label of a run visiting~$p$
  finitely many times has also measure~$0$.  This proves the claim.  This
  shows that at least one entry of~$α$ must be positive. Since $Mα = α$ and
  $M$ is irreducible, all entries of~$α$ are positive.
\end{proof}

The spectral radius of the adjacency matrix of the automaton pictured in
Figure~\ref{fig:unambiguous} is~$1$.  The vector~$α$ is given by
$α_1 = α_4 = 2/3$ and $α_2 = α_3 = 1/3$.

Suppose that the automaton~$𝒜$ is strongly connected and unambiguous and
that the spectral radius of its adjacency matrix~$M$ is~$1$.  The
vector~$α$ is then a right eigenvector of~$M$ for the eigenvalue~$1$.
There is also a left eigenvector $π = (π_q)_{q∈ Q}$ for the eigenvalue~$1$.
This vector~$π$ is strictly positive and we normalize it in such a way that
$∑_{q ∈ Q}{π_qα_q} = 1$.

The left eigenvector~$π$ of the adjacency matrix of the automaton pictured
in Figure~\ref{fig:unambiguous}, normalized as explained just above, is
given by $π_1 = π_3 = 2/3$ and $π_2 = π_4 = 1/3$.  The vector of
$(π_qα_q)_{q ∈ Q}$ is then given $π_1α_1 = 4/9$, $π_2α_2 = π_3α_3 = 2/9$
and $π_4α_4 = 1/9$.

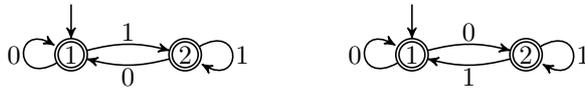
\begin{figure}[htbp]
  \begin{center}
  \begin{tikzpicture}[->,>=stealth',initial text=,semithick,auto,inner sep=1pt]
  \tikzstyle{every state}=[minimum size=0.4]
  \node[state,initial above,accepting] (q1) at (0,0) {$1$};
  \node[state,accepting] (q2) at (1.5,0) {$2$};
  \path (q1) edge[out=210,in=150,loop] node {$0$} (q1);
  \path (q1) edge [bend left=15] node {$1$} (q2);
  \path (q2) edge [bend left=15] node {$0$} (q1);
  \path (q2) edge[out=30,in=-30,loop] node {$1$} (q2);
  \end{tikzpicture}
  \hspace{1cm}
  \begin{tikzpicture}[->,>=stealth',initial text=,semithick,auto,inner sep=1pt]
  \tikzstyle{every state}=[minimum size=0.4]
  \node[state,initial above,accepting] (q1) at (0,0) {$1$};
  \node[state,accepting] (q2) at (1.5,0) {$2$};
  \path (q1) edge[out=210,in=150,loop] node {$0$} (q1);
  \path (q1) edge [bend left=15] node {$0$} (q2);
  \path (q2) edge [bend left=15] node {$1$} (q1);
  \path (q2) edge[out=30,in=-30,loop] node {$1$} (q2);
  \end{tikzpicture}
  \end{center}
  \caption{Unambiguous automata}
  \label{fig:matrix}
\end{figure}

We would like to emphasize that that the adjacency matrix~$M$ is not
sufficient to compute the vector~$α$.  Said differently, two automata with
the same adjacency matrix may have different vectors~$α$.  Consider the two
automata pictured in Figure~\ref{fig:matrix}.  The leftmost one is
deterministic whereas the rightmost one is reverse deterministic.  Both
automata have the same matrix
$M = \frac{1}{2}\left(\begin{smallmatrix} 1&1 \\
    1&1 \end{smallmatrix}\right)$ as adjacency matrix.  For the leftmost
automaton the sets $F_1$ and~$F_2$ are both equal to $\{ 0, 1\}^ℕ$ and thus
$α_1 = α_2 = 1$.  For the leftmost automaton the sets $F_1$ and~$F_2$ are
respectively equal to $0\{ 0, 1\}^ℕ$ and $1\{ 0, 1\}^ℕ$ and thus
$α_1 = α_2 = 1/2$.

Note however that since $α$ is the eigenvector of the
irreducible matrix~$M$ for its Perron-Frobenius eigenvalue, it is unique up
to a multiplicative factor.  This means that the ratios $α_q/α_p$ can be
computed from the matrix~$M$.
     
\section{Markov chain of an unambiguous automaton}

In this section, we introduce a Markov chain associated with an unambiguous
automaton.  The use of the ergodic theorem applied to this Markov chain is
the main ingredient in the proof of Theorem~\ref{thm:weighted}.  Let $𝒜$ be
a strongly connected and unambiguous automaton and let $p$ be one of its
states.  We also suppose that the spectral radius of its adjacency
matrix~$M$ is~$1$.  By Proposition~\ref{pro:acceptnormal}, the
measure~$α_q$ of each set~$F_q = \{ x : q \trans{x} ∞ \}$ is non-zero.

We define a stochastic process $(X_n)_{n ⩾ 0}$ as follows.  Its sample set
is the set $F_p ⊆ A^ℕ$ equipped with the uniform measure~$μ$. For each
sequence~$x = x_1x_2x_3 ⋯$ in~$F_p$, there exists a unique accepting run
\begin{displaymath}
  p = q_0 \trans{x_1} q_1 \trans{x_2} q_2 \trans{x_3} q_3 ⋯ 
\end{displaymath}
The process is defined by setting $X_n(x) = q_n$ for each $x ∈ F_p$.
The following proposition states the main property of this process.
\begin{proposition}
  The process $(X_n)_{n ⩾ 0}$ is a Markov chain.
\end{proposition}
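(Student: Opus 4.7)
The plan is to compute the joint probability of any cylinder event $\{X_0 = p, X_1 = q_1, \ldots, X_n = q_n\}$ in closed form and then verify that the resulting conditional probability of $X_{n+1}$ depends only on $X_n$. The central tool will be the identity $F_p = \biguplus_{p \trans{a} q} a F_q$ together with the unambiguity of $\mathcal{A}$, which guarantees that this union is disjoint.

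First I would fix the probability space: $F_p$ with measure $\mu$, normalized by $\mu(F_p) = \alpha_p > 0$ (positivity is exactly what Proposition~\ref{pro:acceptnormal} provides, since $\lambda = 1$ when $\mathcal{A}$ accepts a normal sequence). Next I would observe that for any concrete \emph{path} in $\mathcal{A}$, i.e.\ any sequence of transitions $p \trans{a_1} q_1 \trans{a_2} q_2 \cdots \trans{a_n} q_n$ with fixed labels, the set of $x \in F_p$ whose unique accepting run begins with this path is exactly $a_1 a_2 \cdots a_n F_{q_n}$, whose measure is $(\#A)^{-n}\alpha_{q_n}$; this uses unambiguity at every step to identify the run with its input label.

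Then I would sum over all paths with fixed state sequence $(p, q_1, \ldots, q_n)$: the number of transitions from $q_{i-1}$ to $q_i$ is $\#A \cdot M_{q_{i-1}, q_i}$, so, again using unambiguity to keep the union disjoint, I get
\begin{displaymath}
  \mu\bigl(X_0 = p,\, X_1 = q_1,\, \ldots,\, X_n = q_n\bigr)
   = M_{p,q_1} M_{q_1,q_2} \cdots M_{q_{n-1},q_n}\, \alpha_{q_n} .
\end{displaymath}
Dividing this by $\alpha_p$ gives the joint probability, and dividing two such expressions gives
\begin{displaymath}
  \prob\bigl(X_{n+1} = q \mid X_0 = p, X_1 = q_1, \ldots, X_n = q_n\bigr)
   = \frac{M_{q_n, q}\, \alpha_q}{\alpha_{q_n}} ,
\end{displaymath}
which depends only on $q_n$. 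The same formula also shows that $\prob(X_{n+1} = q \mid X_n = q_n)$ equals this quantity, establishing the Markov property.

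The main delicacy is not a calculation but making precise where unambiguity is used: it is needed twice, once to say that the accepting run is a well-defined function of $x$ (so $X_n$ is well-defined), and once to say that the decomposition $F_p = \biguplus a F_q$ is genuinely disjoint, so that measures add rather than over-count paths. A secondary point worth stating is that $\alpha_{q_n} > 0$ for every reachable $q_n$, so the conditioning is not degenerate; this is where Proposition~\ref{pro:acceptnormal} is invoked. Once these two points are spelled out, the rest is bookkeeping with the adjacency matrix $M$.
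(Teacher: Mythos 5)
Your proposal is correct and follows essentially the same route as the paper: both identify the event $\{X_0=p,\dots,X_n=q_n\}$ with a disjoint (by unambiguity) union of sets of the form $wF_{q_n}$ of measure $(\#A)^{-n}\alpha_{q_n}$, compute the resulting cylinder probabilities, and observe that the conditional probability of $X_{n+1}$ given the whole past reduces to $M_{q_n,q}\alpha_q/\alpha_{q_n}$, depending only on $q_n$. Your version merely packages the computation more explicitly as a product over the adjacency matrix and normalizes by $\alpha_p$, which is a cosmetic difference, not a different argument.
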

\begin{proof}
  We prove that this process is actually a Markov chain.  A sequence~$x$
  satisfies $X_n(x) = q_n$ if and only if, when factorizing $x$ as $x = wy$
  with $w = x[1{:}n]$, the word~$w$ is the label of a finite run
  $p \trans{w} q_n$ and the sequence~$y$ belongs to the set~$F_q$.  This
  remark allows us to compute the probability that $X_n = q_n$ for a given
  state~$q_n$.
  \begin{align*}
    \prob(X_n = q_n) & = μ(\{ w ∈ A^n : p \trans{w} q_n\}) α_{q_n} \\
                     & = \#\{ w ∈ A^n : p \trans{w} q_n\} α_{q_n}/(\#A)^n 
  \end{align*}
  A similar reasoning allows us to compute the probability that $X_n = q_n$
  and $X_{n+1} = q_{n+1}$ for two given states $q_n$ and~$q_{n+1}$.
  \begin{displaymath}
    \prob(X_{n+1} = q_{n+1}, X_n = q_n)
    = μ(\{ wa ∈ A^{n+1} : p \trans{w} q_n \trans{a} q_{n+1}\}) α_{q_{n+1}} 
  \end{displaymath}
  Using the definition of conditional probability, we get 
  \begin{displaymath}
    \prob(X_{n+1} = q_{n+1} | X_n = q_n) =
    \frac{\#\{ a ∈ A : q_n \trans{a} q_{n+1}\}α_{q_{n+1}}}{(\#A)α_{q_n}}
  \end{displaymath}
  Let $q_0,…, q_n$ be $n+1$ states of the automaton~$𝒜$ such that
  $q_0 = p$.  A sequence~$x$ satisfies $X_n(x) = q_n,… , X_0(x) = q_0$ if
  and only if the sequence~$x$ can be factorized $x = wx'$ where the
  word~$w = a_1 ⋯ a_n$ is the prefix of length~$n$ of~$x$, there is a
  finite run $q_0 \trans{a_1} q_1 ⋯ q_{n-1} \trans{a_n} q_n$ and $x'$
  belongs to the set~$F_{q_n}$.
  \begin{displaymath}
    \prob(X_n = q_n,…,X_0 = q_0)
    = μ(\{ a_1⋯ a_n  ∈ A^n : q_0  \trans{a_1} q_1 ⋯
                                   q_{n-1} \trans{a_n} q_n\}) α_{q_n} 
  \end{displaymath}
  Using again the definition of conditional probability, we get 
  \begin{displaymath}
    \prob(X_{n+1} = q_{n+1} | X_n = q_n,\dots,X_0 = q_0) =
      \frac{\#\{ a ∈ A : q_n \trans{a} q_{n+1}\} α_{q_{n+1}}}{(\#A)α_{q_n}}
  \end{displaymath}
  Since $\prob(X_{n+1} = q_{n+1} | X_n = q_n)$ and
  $\prob(X_{n+1} = q_{n+1} | X_n = q_n,\dots,X_0 = q_0)$ have the same
  value, the process is indeed a Markov chain.
\end{proof}
Let $P$ the $Q × Q$-matrix of probabilities for the introduced Markov
chain.  For each states $p,q ∈ Q$, the $(p,q)$-entry of~$P$ is given by
$P_{p,q} = \#\{ a ∈ A : p \trans{a} q \}α_q/(\#A)α_p$.  Note that the
matrix~$P$ and the adjacency matrix~$M$ of~$𝒜$ are related by the
equalities $P_{p,q} = M_{p,q}α_q/α_p$ for each states $p,q ∈ Q$.  We claim
that the stationary distribution of the stochastic matrix~$P$ is the
vector $(π_qα_q)_{q ∈ Q}$ where $π = (π_q)_{q ∈ Q}$ is the left eigenvector
of the matrix~$M$ for the eigenvalue~$1$.  Let us recall that $π$ has been
normalized such that $∑_{q ∈ Q} π_qα_q = 1$.
\begin{align*}
  ∑_{p ∈ Q} π_pα_pP_{p,q} & = α_q ∑_{p ∈ Q}π_pM_{p,q} \\
                          & = π_qα_q
\end{align*}

Runs are defined as sequences of consecutive transitions, and can be
considered as words over the alphabet made of all transitions.  Therefore,
the notion of \emph{frequency} $\freq(ρ, γ)$ of a finite
run~$γ$ in an infinite run~$ρ$ is defined as in
Section~\ref{sec:basic}.  Note that $\freq(ρ, γ)$ is a limit and
might not exist.  As a run can merely be regarded as a sequence of states,
$\freq(ρ,q)$ is defined similarly when $q$ is a state.  Note that
$\freq(ρ,q)$ could equivalently be defined as the sum of all
$\freq(ρ,τ)$ where $τ$ ranges over all transitions (seen as runs
of length~$1$) starting from~$q$.

The application of the ergodic theorem to the previous Markov chain is used
to prove the following proposition.  It states that in a run whose label is
a normal sequence, each state has a limiting frequency and that this
frequency is given by the stationary distribution $(π_qα_q)_{q ∈ Q}$.  This
statement is an extension to unambiguous automata of Lemma~4.5
in~\cite{Schnorr71} which is only stated for deterministic automata.

\begin{proposition} \label{pro:statefreq}
  Let $𝒜$ be a strongly connected and unambiguous automaton such that the
  spectral radius of its adjacency matrix is~$1$.  Let $ρ$ be an accepting
  run whose label is a normal sequence.  Then, for any state~$r$
  \begin{displaymath}
    \lim_{n→∞}\frac{\occ{ρ[1{:}n]}{r}}{n} = π_rα_r
  \end{displaymath}
  where $ρ[1{:}n]$ is the finite run made of the first $n$ transitions
  of~$ρ$.
\end{proposition}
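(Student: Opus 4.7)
The plan is to invoke Birkhoff's ergodic theorem on the Markov chain $(X_n)$ just introduced. Since $\mathcal{A}$ is strongly connected, the chain is irreducible on its finite state space $Q$, and its unique stationary distribution is $(\pi_q\alpha_q)_{q \in Q}$, as already computed above. The ergodic theorem thus yields, for $\mu$-almost every $x \in F_p$ and every state $r$,
\[
  \lim_{n \to \infty} \frac{1}{n}\,\#\{k \leq n : X_k(x) = r\} = \pi_r \alpha_r,
\]
which is the desired conclusion for $\mu$-almost every accepting input.

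The substantive task is to upgrade this from ``$\mu$-almost every $x$'' to ``every normal $x$'', since a priori a normal sequence could lie in the null exceptional set. The idea is to exploit the quantitative strength of normality: every finite word $w$ occurs in $x$ with frequency exactly $(\#A)^{-|w|}$. Observe that $\mathbf{1}_{\{X_k(x)=r\}}$ factors as $\mathbf{1}_{\{p \trans{x[1:k]} r\}}\cdot \mathbf{1}_{F_r}(x[k+1{:}\infty])$; the first factor is a cylinder function of a prefix, while $F_r \subseteq A^{\mathbb{N}}$ is a Borel set of measure $\alpha_r$ and so can be approximated in measure by a finite union of cylinders read off a bounded window. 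Normality of $x$ then forces the Cesaro frequency of each such cylinder event to equal its measure exactly, pinning $\frac{1}{n}\sum_{k=1}^n \mathbf{1}_{\{X_k(x)=r\}}$ to within any prescribed $\varepsilon$ of $\pi_r\alpha_r$ and hence forcing the limit to equal $\pi_r\alpha_r$.

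The hard part will be handling the suffix dependence: because the automaton is only unambiguous, $X_k(x)$ is not a cylinder function of $x$, and the state at position $k$ is only fully revealed by the entire future of the input. One must therefore show that the density of positions where $X_k$ cannot be determined from a window of radius $L$ around $k$ tends to $0$ as $L \to \infty$. This reduces to a measure bound on suffixes that lead to a ``wrong'' state, which can be extracted by applying Proposition~\ref{pro:acceptnormal} to a suitable subautomaton whose spectral radius is strictly less than~$1$. Combining this approximation with the exact block frequencies guaranteed by normality closes the argument.
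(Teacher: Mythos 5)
The $\mu$-almost-everywhere statement in your first paragraph is correct and is also where the paper starts, but the upgrade from ``almost every $x$'' to ``every normal $x$'' --- which is the entire content of the proposition --- has a genuine gap. The event $\{X_k(x)=r\}$ is not determined by any bounded window around position $k$: it requires both that the run $p \trans{x[1{:}k]} r$ exists (a condition on the whole growing prefix, so not a cylinder function in any useful sense) and that $x[k+1{:}\infty]\in F_r$ (a tail condition). Normality gives exact Cesàro frequencies only for occurrences of fixed finite blocks; it says nothing about the joint frequency of a prefix-run condition together with an approximating union of suffix cylinders, so the step ``normality forces the Cesàro frequency of each such cylinder event to equal its measure, pinning $\frac{1}{n}\sum_{k\le n}\mathbf{1}_{\{X_k=r\}}$ within $\varepsilon$ of $\pi_r\alpha_r$'' does not follow: the correlation between the prefix factor and the suffix approximation is exactly what is not controlled. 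Your proposed remedy for the suffix dependence --- that the density of positions where $X_k$ cannot be read off a radius-$L$ window tends to $0$ --- is itself a property of the particular run on the particular normal sequence $x$; applying Proposition~\ref{pro:acceptnormal} to a subautomaton only yields a measure (or counting) bound, and transporting such a bound to the fixed normal $x$ is precisely the difficulty you set out to resolve, so as sketched the argument is circular. The prefix-side dependence is never addressed at all.

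The paper closes this gap by a mechanism that avoids locating the states of $\rho$ altogether. Lemma~\ref{lem:ergodic}, a counting form of the ergodic theorem, shows that for every $\delta,\varepsilon>0$ and all large $k$, all but $\varepsilon(\#A)^k$ of the words $w\in A^k$ have the property that \emph{every} run labeled $w$, between \emph{any} pair of states, has internal frequency of $r$ within $\delta$ of $\pi_r\alpha_r$ --- uniformly in the (unknown) endpoints. One then cuts $\rho$ into aligned blocks of length $k$; normality in its aligned-block form guarantees that each word of length $k$ occupies a proportion at least $(1-\varepsilon)/(\#A)^k$ of these blocks, so most blocks of $\rho$ are good whatever their boundary states happen to be, giving $\liminf_{n}\occ{\rho[1{:}n]}{r}/n \ge (1-\varepsilon)^2(\pi_r\alpha_r-\varepsilon)$; finally, since $\sum_{q}\pi_q\alpha_q=1$, having the liminf lower bound for every state forces all the limits to exist and equal $\pi_r\alpha_r$. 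If you want to pursue your window-based route you would need an analogue of this uniformity-over-endpoints step together with an explicit transfer from block frequencies to the non-local events $\{X_k=r\}$; as written, the proposal asserts that transfer rather than proving it.
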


Note that the result of Proposition~\ref{pro:statefreq} implies that the
frequencies of states do not depend on the input as long as this input is
normal.  Note also that the result is void if the spectral radius of the
adjacency matrix is less than~$1$ because, by
Proposition~\ref{pro:acceptnormal}, no accepting run is labeled by a normal
sequence.  This assumption could be removed because the statement remains
true but this is our choice to mention explicitly the assumption for
clarity.

The proof of the proposition is based on the following lemma.  Since the
automaton~$𝒜$ in unambiguous, there is, for two given states $p$ and~$q$
and a given word~$w$, a unique run from~$p$ to~$q$ labeled by~$w$.  This
run is written $p \trans{w} q$ as usual.

\begin{lemma} \label{lem:ergodic}
  Let $𝒜$ be a strongly connected and unambiguous automaton such that the
  spectral radius of its adjacency matrix is~$1$.  Let $r$ be a fixed state
  of~$𝒜$.  For any positive real numbers $δ, ε > 0$, there exists an
  integer~$n$ such that for each integer $k ⩾ n$,
  \begin{displaymath}
    \#\{ w ∈ A^k : ∃ p,q ∈ Q^2 \;\;
    |\occ{p \trans{w} q}{r}/k - π_rα_r| > δ \} < ε(\#A)^k
  \end{displaymath}
\end{lemma}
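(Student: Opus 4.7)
The plan is to apply the ergodic theorem (strong law of large numbers) for the Markov chain $(X_n)_{n \geq 0}$ of the previous section, and then translate the resulting $\mu$-measure estimate on $F_p$ into a cardinality estimate in $A^k$ by a disjoint-cylinder argument.

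I would first fix a starting state $p \in Q$. The chain $(X_n)$ on~$F_p$ has irreducible transition matrix (since $\mathcal{A}$ is strongly connected and every $\alpha_q$ is positive by Proposition~\ref{pro:acceptnormal}), and its unique stationary distribution is $(\pi_q\alpha_q)_{q \in Q}$. The strong law of large numbers for irreducible finite-state Markov chains therefore yields, for $\mu$-almost every $x \in F_p$,
\begin{displaymath}
  \lim_{k \to \infty} \frac{\#\{0 \leq n \leq k : X_n(x) = r\}}{k} = \pi_r \alpha_r.
\end{displaymath}
Since the quantities are uniformly bounded, almost-sure convergence implies convergence in measure: for any $\eta > 0$ there exists $n_p$ such that, for all $k \geq n_p$, the bad set
\begin{displaymath}
  \mathrm{Bad}_{p,k} = \{x \in F_p : |\#\{0 \leq n \leq k : X_n(x) = r\}/k - \pi_r \alpha_r| > \delta\}
\end{displaymath}
has $\mu$-measure less than~$\eta$.

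Next, I would translate this into a word count. Let $B_{p,k} \subseteq A^k$ be the set of words~$w$ for which there exists a (necessarily unique, by unambiguity from~$p$) state~$q_w$ such that the run $p \trans{w} q_w$ exists and $|\occ{p \trans{w} q_w}{r}/k - \pi_r\alpha_r| > \delta$. The cylinders $w F_{q_w}$, for $w \in B_{p,k}$, are pairwise disjoint subsets of~$F_p$ whose union is exactly $\mathrm{Bad}_{p,k}$, and each has measure $\alpha_{q_w}/(\#A)^k$. Setting $\alpha_{\min} = \min_{q \in Q} \alpha_q > 0$, this yields $\alpha_{\min} \#B_{p,k}/(\#A)^k \leq \mu(\mathrm{Bad}_{p,k}) < \eta$. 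Choosing $\eta = \varepsilon \alpha_{\min}/\#Q$ gives $\#B_{p,k} < (\varepsilon/\#Q)(\#A)^k$ for $k \geq n_p$.

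Finally, taking $n = \max_{p \in Q} n_p$, the set of words in the statement of the lemma is contained in $\bigcup_{p \in Q} B_{p,k}$, so a union bound over the $\#Q$ starting states concludes the proof. The main obstacle is the second step, the translation between $\mu$-measure and word count: it relies on the disjointness of the cylinders $wF_{q_w}$ (a consequence of the uniqueness of~$q_w$, itself guaranteed by unambiguity from~$p$) and on the uniform positive lower bound $\alpha_{\min}$ on their individual measures (guaranteed by Proposition~\ref{pro:acceptnormal}). Both ingredients are essential to convert a vanishing $\mu$-measure into a vanishing fraction of words.
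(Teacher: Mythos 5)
Your proof is correct and follows essentially the same route as the paper: the ergodic theorem for the Markov chain $(X_n)$ started at~$p$, then a translation of the resulting measure bound into a word count via cylinders $wF_q$ of measure $\alpha_q/(\#A)^k \geq \alpha_{\min}/(\#A)^k$. The paper fixes the pair $(p,q)$ and loses a factor $(\#Q)^2$ where you fix only~$p$ and lose a factor $\#Q$; this is an inessential variation.

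One parenthetical claim is wrong, though harmlessly so: the ending state $q_w$ of a run from~$p$ labelled~$w$ is \emph{not} unique in an unambiguous automaton (in Figure~\ref{fig:unambiguous}, from state~$1$ the letter~$0$ leads to both states $1$ and~$3$); unambiguity only gives uniqueness of the run once \emph{both} endpoints are fixed, and uniqueness of the infinite final run from~$p$. Consequently your union $\bigcup_{w \in B_{p,k}} wF_{q_w}$, with one chosen witness per word, need not be exactly $\mathrm{Bad}_{p,k}$. Neither point damages the argument: disjointness of your cylinders holds trivially because distinct words of length~$k$ are incomparable prefixes, and only the inclusion $wF_{q_w} \subseteq \mathrm{Bad}_{p,k}$ is needed, not equality. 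That inclusion is the step worth making explicit (the paper does): for $y \in F_{q_w}$, the unique final run from~$p$ labelled $wy$ begins with the finite run $p \trans{w} q_w$, so the empirical average $S_k$ is constant on the cylinder and equal to the deviating value $\occ{p \trans{w} q_w}{r}/k$ — this is where unambiguity from~$p$ is genuinely used, rather than in any uniqueness of~$q_w$.
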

Note that for a triple $(p,w,q)$, the finite run $p \trans{w} q$ might not
exist.  When we write $|\occ{p \trans{w} q}{r}/k - π_rα_r| > δ$, it should
be understood as follows. The run $p \trans{w} q$ does exist and it
satisfies $|\occ{p \trans{w} q}{r}/k - π_rα_r| > δ$.

\begin{proof}
  Since there are finitely many pairs $(p,q)$ in~$Q^2$, it suffices to
  prove
  \begin{displaymath}
    \#\{ w ∈ A^k : |\occ{p \trans{w} q}{r}/k - π_rα_r| > δ \} < ε(\#A)^k/(\#Q)^2
  \end{displaymath}
  for each pair~$(p,q)$ in~$Q^2$.  Therefore, we assume that a pair $(p,q)$
  is fixed.  We consider the Markov chain $(X_n)_{n⩾0}$ introduced above
  with initial state~$p$.  This Markov chain is irreducible because $𝒜$ is
  strongly connected.  We apply the ergodic theorem for Markov chains
  \cite[Thm~4.1]{Bremaud08} to the function $f = 𝟙_r$ defined by
  $𝟙_r(s) = 1$ if $s = r$ and $𝟙_r(s) = 0$ otherwise.  It follows that
  $\lim_{n→∞} S_n = π_rα_r$ for almost all sequences in~$F_p$ where
  $S_n = \frac{1}{n}∑_{i=1}^n 𝟙_r(X_i)$.  The positive numbers $δ$ and~$ε$
  being fixed, there is an integer~$n$ such that, for each $k ⩾ n$, the
  measure of the set $\{ x : |S_k - π_rα_r| > δ \}$ is less than~$ε$.
  Consider now a set~$F$ of the form $F = wF_q$ where $w$ is a word of
  length~$k ⩾ n$ and $q$ is the state that has been fixed.  The measure
  of~$F$ is $α_q / (\#A)^k$. If the run $p \trans{w} q$ does exist, then
  $S_k$ is constant on the set~$F$ because the number of occurrences of~$r$
  in the first $k$ positions of the run only depends on the finite run
  $p \trans{w} q$.  It follows that the number of words~$w$ of length~$k$
  such that $p \trans{w} q$ does exist and satisfies
  $|\occ{p \trans{w} q}{r}/k = π_rα_r| > δ$ is bounded by $ε(\#A)^k/α_q$.
  The result is then obtained by replacing $ε$ by $ε\min \{ α_q : q ∈ Q\}$
  which is positive because all entries of~$α$ are positive.
\end{proof}

We now come to the proof of Proposition~\ref{pro:statefreq}.
\begin{proof}[Proof of Proposition~\ref{pro:statefreq}]
  Let $ρ$ be an accepting run in~$𝒜$ whose label is a normal sequence~$x$.
  Since $∑_{q ∈ Q} π_qα_q = 1$, it suffices to prove that
  $\liminf_{n→∞}\occ{ρ[1{:}n]}{r}/n ⩾ π_rα_r$ for each state~$r$.  We fix
  an arbitrary positive real number~$ε$.  Applying Lemma~\ref{lem:ergodic}
  with $δ = ε$, we get an integer~$k$ such that the set $B ⊆ A^k$ defined
  by
  \begin{displaymath}
    B = \{ w ∈ A^k : ∃ p,q ∈ Q^2 \;\;
    |\occ{p \trans{w} q}{r}/k - π_rα_r| > ε \} 
  \end{displaymath}
  satisfies $\#B < ε(\#A)^k$.  The run~$ρ$ is then factorised
  \begin{displaymath}
    ρ = q_0 \trans{w_1} q_1 \trans{w_2} q_2 \trans {w_3} q_3 ⋯ 
  \end{displaymath}
  where each word~$w_i$ has length~$k$ and $x = w_1w_2w_3⋯$ is a
  factorization of~$x$ in blocks of length~$k$. Since the sequence~$x$ is
  normal, there is an integer~$N$ such that for each $n ⩾ N$ and each
  word~$w$ of length~$k$, the cardinality of the set
  $\{ 1 ⩽ i ⩽ n : w_i = w \}$ satisfies
  $\#\{ 1 ⩽ i ⩽ n : w_i = w \} ⩾ n(1 - ε)/(\#A)^k$.
  \begin{align*}
    \liminf_{n→∞}\frac{\occ{ρ[1{:}n]}{r}}{n} 
       & = \liminf_{n→∞} \frac{1}{nk}
         ∑_{i = 1}^n \occ{q_{i-1} \trans{w_i} q_i}{r} \\
       & ⩾ \liminf_{n→∞} 
         ∑_{w ∈ A^k} \frac{\#\{ 1 ⩽ i ⩽ n : w_i = w \}}{n}
         \min_{p,q ∈ Q^2} \frac{\occ{p \trans{w} q}{r}}{k} \\
       & ⩾ ∑_{w ∉ B} (1 - ε)(π_rα_r-ε)/(\#A)^k \\
       & ⩾ (1 - ε)^2(π_rα_r-ε)
  \end{align*}
  Since this is true for any $ε > 0$,
  $\liminf_{n→∞}\occ{ρ[1{:}n]}{r}/n ⩾ π_rα_r$ holds for each
  state~$r$. This completes the proof of the proposition.
\end{proof}

Proposition~\ref{pro:statefreq} states that each state has a frequency in a
run whose label is normal.  This result can be extended to finite runs as
follows.  The stochastic matrix~$P$ and its stationary distribution
$(π_qα_q)_{q ∈ Q}$ induce a canonical distribution on finite runs in the
automaton~$𝒜$.  This distribution is defined as follows.
\begin{displaymath}
  \prob(q_0 \trans{a_1} q_1 ⋯ q_{n-1} \trans{a_n} q_n) =
  \frac{π_{q_0} α_{q_n}}{(\#A)^n}.
\end{displaymath}
We claim that for each integer~$n$, this is indeed a distribution
on runs of length~$n$.  This means that 
\begin{displaymath}
  ∑_{p,q ∈ Q^2, w ∈ A^n} \prob(p \trans{w} q) = 1.
\end{displaymath}
For each symbol $a ∈ A$, define the $Q × Q$-matrix~$P_a$ by
\begin{displaymath}
  (P_a)_{p,q} = 
  \begin{cases}
    \frac{α_q}{(\#A)α_p} & \text{if $p \trans{a} q$ is a transition of $𝒜$} \\
    0                    & \text{otherwise}.
  \end{cases}
\end{displaymath}
Note that the stochastic matrix~$P$ is equal to the sum $∑_{a ∈ A}P_a$.
Let $w = a_1 ⋯ a_n$ be a word of length~$n$. Let us write $P_w$ for the
product $P_{a_1} ⋯ P_{a_n}$.  This notation is consistent with the notation
$P_a$ for words of length~$1$.  The $(p,q)$-entry of the matrix
$P_w = P_{a_1} ⋯ P_{a_n}$ is equal to
$α_q/(\#A)^nα_p = \prob(p \trans{w} q)/π_pα_p$ if the run $p \trans{w} q$
does exist and to~$0$ otherwise.
\begin{align*}
  ∑_{p,q ∈ Q^2, w ∈ A^n} \prob(p \trans{w} q)
  & = ∑_{p,q ∈ Q^2} π_pα_p ∑_{w ∈ A^n} (P_w)_{p,q} \\
  & = ∑_{p,q ∈ Q^2} π_pα_p (P^n)_{p,q} \\
  & = ∑_{p ∈ Q} π_pα_p = 1
\end{align*}

The following proposition extends to finite runs the statement of
Proposition~\ref{pro:statefreq} about states.
\begin{proposition} \label{pro:runfreq}
  Let $𝒜$ be a strongly connected and unambiguous automaton such that the
  spectral radius of its adjacency matrix is~$1$.  Let $ρ$ be an accepting
  run whose label is a normal sequence.  For any finite run $γ = q_0
  \trans{a_1} q_1 ⋯ q_{n-1} \trans{a_n} q_n$ of length~$n$, one has
  \begin{displaymath}
    \lim_{n→∞}\frac{\occ{ρ[1{:}n]}{γ}}{n} =
    \prob(γ) = \frac{π_{q_0}α_{q_n}}{(\#A)^n}
  \end{displaymath}
  where $ρ[1{:}n]$ is the finite run made of the first $n$ transitions
  of~$ρ$.
\end{proposition}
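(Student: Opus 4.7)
The plan is to mimic the proof of Proposition~\ref{pro:statefreq}, replacing single states by the fixed length-$n$ run $\gamma$; I will use $N$ for the truncation length of $\rho$ to avoid the clash of notation with the length of $\gamma$. First I would establish the direct analogue of Lemma~\ref{lem:ergodic} for runs: for any $\delta,\varepsilon>0$ there exists an integer $k_0$ such that for every $k \geqslant k_0$,
\[ \#\{ w \in A^k : \exists\, p,q \in Q,\; |\occ{p \trans{w} q}{\gamma}/k - \prob(\gamma)| > \delta\} < \varepsilon(\#A)^k. \]
To obtain this, I would lift the Markov chain $(X_n)_{n\geqslant 0}$ from Section~5 to the transition-valued chain $T_n = (X_{n-1}, x_n, X_n)$, which is well defined on $F_p$ by unambiguity of $\mathcal{A}$. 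This is an irreducible finite-state Markov chain whose stationary distribution assigns to a transition $p' \trans{a} q'$ the weight $\pi_{p'}\alpha_{q'}/\#A$, obtained by multiplying the stationary weight $\pi_{p'}\alpha_{p'}$ of $X_{n-1}=p'$ by the one-step conditional probability $\alpha_{q'}/((\#A)\alpha_{p'})$ already computed in the proof of the Markov property. Applying the ergodic theorem to the indicator that the next $n$ transitions coincide with $\gamma$ gives almost-sure convergence of the $\gamma$-frequency in the trajectory to the stationary probability of this event, which telescopes to exactly $\pi_{q_0}\alpha_{q_n}/(\#A)^n = \prob(\gamma)$. The block-counting trick at the end of the proof of Lemma~\ref{lem:ergodic} then turns this almost-sure statement into the uniform count on length-$k$ words displayed above.

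Next I would replay the final combinatorial part of the proof of Proposition~\ref{pro:statefreq}: factor $\rho$ into consecutive blocks $q'_{i-1} \trans{w_i} q'_i$ of length $k$, invoke normality of the input sequence $x$ to ensure that the fraction of indices $i$ for which $w_i$ belongs to the bad set above is at most $\varepsilon$, and deduce
\[ \liminf_{N \to \infty} \frac{\occ{\rho[1{:}N]}{\gamma}}{N} \geqslant \prob(\gamma). \]
A mild bookkeeping issue is that an occurrence of $\gamma$ may straddle two consecutive blocks of length $k$; since $\gamma$ has fixed length $n$, at most $n-1$ occurrences per boundary are missed, contributing $O(n/k)$ to the frequency, which vanishes once $k$ is taken large.

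Finally, to upgrade the $\liminf$ inequality into the claimed limit, I would observe that every position $1 \leqslant i \leqslant N-n+1$ in $\rho[1{:}N]$ is the starting position of exactly one length-$n$ sub-run, so $\sum_{\gamma'} \occ{\rho[1{:}N]}{\gamma'}/N \to 1$, together with the identity $\sum_{\gamma'} \prob(\gamma') = 1$ verified just before the statement. Since both sums are finite and each $\liminf$ is pointwise $\geqslant \prob(\gamma')$, a Fatou-type argument forces equality for each $\gamma'$; the same trick applied to the $\limsup$ (bounded above using the other runs' lower bounds) forces $\limsup \leqslant \prob(\gamma)$, so the limit exists and equals $\prob(\gamma)$.

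The main obstacle is the first step: lifting the state-valued ergodic statement of Lemma~\ref{lem:ergodic} to a length-$n$ run statement. Once the transition-valued chain and the telescoping computation of its stationary run probabilities are in place, the remainder of the argument is a faithful reprise of the treatment of single states in Proposition~\ref{pro:statefreq}.
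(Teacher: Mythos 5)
Your plan is, in substance, the paper's own argument carried out by hand, and it would work once one wrinkle is fixed. The paper disposes of the proposition in a few lines: it builds the higher-block automaton $\mathcal{A}^n$ whose states are the length-$n$ runs of $\mathcal{A}$ and whose transitions slide the window by one symbol, observes that the associated Markov chain is the snake chain with stationary distribution $\xi_{p\trans{w}q}=\pi_p\alpha_q/(\#A)^n$, and then applies Proposition~\ref{pro:statefreq} to $\mathcal{A}^n$ as a black box; occurrences of $\gamma$ in $\rho$ become occurrences of the state $\gamma$ in the lifted run, so the ergodic lemma, the normality block-counting and the $\liminf$-to-limit step are all reused rather than re-proved. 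Your proposal re-derives exactly that machinery for runs, with the same stationary value $\pi_{q_0}\alpha_{q_n}/(\#A)^n$ and the same sum-to-one argument (which the paper indeed verifies just before the statement), so the mathematical content coincides; the extra bookkeeping you add (the $O(n/k)$ loss from occurrences straddling block boundaries, harmless for the lower bound) is correct but unnecessary once the reduction to Proposition~\ref{pro:statefreq} is available.

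The one step that needs repair is the first: the ergodic theorem invoked in Lemma~\ref{lem:ergodic} applies to functions of the current state of the chain, and the indicator that \emph{the next $n$ transitions coincide with $\gamma$} is not a function of the current state of your transition-valued chain $T_m=(X_{m-1},x_m,X_m)$ --- that chain only encodes windows of length one. To make the ergodic theorem applicable you must lift to windows of length $n$, i.e.\ to the chain whose states are the length-$n$ runs of $\mathcal{A}$; this is precisely the snake chain of the automaton $\mathcal{A}^n$ in the paper. Once that lift is made, your telescoping computation of its stationary distribution is exactly the paper's, and at that point a single application of Proposition~\ref{pro:statefreq} to $\mathcal{A}^n$ (whose strong connectedness, unambiguity and spectral radius~$1$ are inherited from $\mathcal{A}$) finishes the proof, so the rest of your plan, while sound, is a longer route to the same place.
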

\begin{proof}
  We now define an automaton whose states are the runs of length~$n$
  in~$𝒜$.  We let $𝒜^n$ denote the automaton whose state set is
  $\{ p \trans{w} q : p,q ∈ Q, w ∈ A^n\}$ and whose set of transitions is
  defined by
  \begin{displaymath}
    \left\{
       γ  \trans{a} γ' :
       \begin{array}{c}
         γ = p \trans{b} p' \trans{w} q \\
         γ' = p' \trans{w} q \trans{a} q'
       \end{array}
       \text{with } a,b ∈ A \text{ and } w \in A^{n-1}
    \right\}
  \end{displaymath}
  The Markov chain associated with the automaton~$𝒜^n$ is called the
  \emph{snake} Markov chain.  See Problems 2.2.4, 2.4.6 and 2.5.2 (page~90)
  in \cite{Bremaud08} for more details.  It is pure routine to check that
  the stationary distribution~$ξ$ of~$𝒜^n$ is given by
  $ξ_{p \trans{w} q} = \prob(p \trans{w} q) = π_pα_q/(\#A)^n$ for each
  finite run $p \trans{w} q$ of length~$n$ in~$𝒜$.  To prove the statement,
  apply Proposition~\ref{pro:statefreq} to the automaton~$𝒜^n$.
\end{proof}

It should be pointed out that the distribution on finite path which is
defined above is the Parry measure of the edge shift of the automaton. This
shift is the shift of finite type whose symbols are the edges of the
automaton \cite[Thm~6.2.20]{Kitchens98}.

Let $γ$ be a finite run whose first state is~$p$ and let $ρ$ be an infinite
run.  We call \emph{conditional frequency} of~$γ$ in~$ρ$ the ratio
$\freq(ρ,γ)/\freq(ρ,p)$.  It is defined as soon as both frequencies
$\freq(ρ,γ)$ and $\freq(ρ,p)$ do exist.  The corollary of Propositions
\ref{pro:statefreq} and~\ref{pro:runfreq} is the following.
\begin{corollary} \label{cor:condfreq}
  The conditional frequency of a finite run $p \trans{w} q$ of length~$n$
  in an accepting run whose label is normal is $α_q/(\#A)^nα_p$.
\end{corollary}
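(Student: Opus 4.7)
The plan is to use the two preceding propositions essentially as a black box and take a ratio. Given an accepting run $ρ$ whose label is a normal sequence and a finite run $γ = p \trans{w} q$ of length~$n$, the conditional frequency is by definition $\freq(ρ,γ)/\freq(ρ,p)$, so I only need to evaluate both frequencies and divide.

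First I would apply Proposition~\ref{pro:runfreq} to $γ$ to obtain
\begin{displaymath}
  \freq(ρ, γ) = \prob(γ) = \frac{π_p α_q}{(\#A)^n},
\end{displaymath}
using that the first state of $γ$ is~$p$ and its last state is~$q$. Next I would apply Proposition~\ref{pro:statefreq} to the state~$p$, which gives $\freq(ρ, p) = π_p α_p$. Both propositions require the hypotheses that $𝒜$ is strongly connected, unambiguous, and that the spectral radius of its adjacency matrix is~$1$; the last hypothesis is automatic here, since by Proposition~\ref{pro:acceptnormal} the existence of an accepting run $ρ$ labeled by a normal sequence forces the spectral radius to equal~$1$.

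Finally, taking the ratio and using that $π_p$ is positive (which also follows from Proposition~\ref{pro:acceptnormal} together with the strict positivity of the left Perron eigenvector for the irreducible matrix~$M$) yields
\begin{displaymath}
  \frac{\freq(ρ, γ)}{\freq(ρ, p)}
  = \frac{π_p α_q / (\#A)^n}{π_p α_p}
  = \frac{α_q}{(\#A)^n α_p},
\end{displaymath}
which is exactly the claimed value. There is no substantial obstacle: the whole content of the corollary is the cancellation of~$π_p$ between numerator and denominator, and all the hard work has already been done in Propositions \ref{pro:statefreq} and~\ref{pro:runfreq}.
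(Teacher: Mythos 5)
Your proposal is correct and is exactly the argument the paper intends: the paper offers no separate proof and simply presents the corollary as an immediate consequence of Propositions \ref{pro:statefreq} and~\ref{pro:runfreq}, i.e.\ dividing $\freq(ρ,γ)=π_pα_q/(\#A)^n$ by $\freq(ρ,p)=π_pα_p$ and cancelling~$π_p$. Your added remarks that the spectral-radius-one hypothesis is forced by Proposition~\ref{pro:acceptnormal} and that $π_p>0$ justifies the division are accurate and, if anything, make the deduction more explicit than the paper does.
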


\section{Algorithms and Proofs} \label{sec:proofs}

In this section we provide the proofs for Theorems \ref{thm:weighted}
and~\ref{thm:preservation}.  The proofs are organized in three parts.
First, the transducer~$𝒯$ is normalized into another transducer~$𝒯'$
realizing the same function.  Then this latter transducer is used to define
a weighted automaton~$𝒜$.  Second, the proof that the construction of~$𝒜$
is correct is carried out.  Third, the algorithms computing $𝒜$ and
checking whether $𝒯$ preserves normality or not are given.

By Proposition~\ref{pro:decomposition} below, it suffices to analyze
preservation of normality in some of the strongly connected components.
\begin{proposition} \label{pro:decomposition}
  An unambiguous transducer~$𝒯$ preserves normality if and only each
  strongly connected component of~$𝒯$ with a final state and spectral
  radius~$1$ preserves normality.
\end{proposition}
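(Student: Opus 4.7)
The plan is to exploit two soft ingredients: (i) any accepting run in $\mathcal{T}$ is eventually trapped inside a single strongly connected component, which must contain a final state because a final state is visited infinitely often; and (ii) prepending or removing a finite prefix of a sequence preserves normality, since the limiting frequency of every word is insensitive to finitely many positions. Together with Proposition~\ref{pro:acceptnormal} and the fact, noted already in the paper, that each SCC of an unambiguous transducer is itself unambiguous, these ingredients localise preservation of normality to the ``terminal'' SCC along any accepting run. Throughout, I view an SCC $C$ of $\mathcal{T}$ as the unambiguous sub-transducer obtained by keeping only its states and internal transitions, with final states inherited from $\mathcal{T}$ and initial states ranging over all of $C$.

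For the implication ($\Rightarrow$), I would fix an SCC $C$ with a final state and spectral radius $1$, and pick a normal sequence $x$ in the domain of $C$, realised by an accepting run starting from some state $p \in C$; such an $x$ exists by Proposition~\ref{pro:acceptnormal}. Trimness of $\mathcal{T}$ yields a finite run of $\mathcal{T}$ from some initial state to $p$, with input word $u$. Concatenating it with the $C$-run on $x$ gives an accepting run of $\mathcal{T}$ on $ux$, because the resulting tail still visits the final state of $C$ infinitely often; and $ux$ is normal by (ii). The hypothesis turns $\mathcal{T}(ux)$ normal, but $\mathcal{T}(ux)$ is the output produced along $u$ (a fixed finite word) followed by the output of $C$ on $x$, so (ii) forces the latter to be normal.

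For the implication ($\Leftarrow$), I would start from a normal $x$ in the domain of $\mathcal{T}$, take its unique accepting run $\rho$, and use (i) to write $x = u x'$ so that the tail of $\rho$ lies entirely inside a single SCC $C$ with a final state. Then $x'$ is normal by (ii), and its run in $C$ starting from the entry state is accepting for $C$, so Proposition~\ref{pro:acceptnormal} forces the spectral radius of $C$ to be $1$. The hypothesis then gives that $C(x')$ is normal, and $\mathcal{T}(x)$, being the finite $u$-output followed by $C(x')$, is normal by (ii) again. Moreover any SCC of $\mathcal{T}$ that either lacks a final state or has spectral radius strictly less than $1$ cannot host the tail of any accepting run on a normal input, so it plays no role.

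The main obstacle I anticipate is essentially bookkeeping: fixing a precise convention for ``an SCC preserves normality'' (the sub-transducer described above, with all its states allowed as starting states) and checking that the two concatenation operations used in the argument really produce accepting runs on the respective sides. Once this setup is pinned down, both directions are immediate from Proposition~\ref{pro:acceptnormal} together with the shift-invariance of normality, with no further calculation needed.
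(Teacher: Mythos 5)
Your proof is correct and follows essentially the same route as the paper: both directions rest on Proposition~\ref{pro:acceptnormal}, the observation that an accepting run on a normal input is eventually trapped in an SCC containing a final state (forcing spectral radius~$1$), and the construction of a run on $ux$ via a finite access path, using the invariance of normality under adding or removing a finite prefix. You merely spell out details the paper leaves implicit (the convention for an SCC as a sub-transducer with all its states initial, and running the argument for an arbitrary normal $x$ in the domain of $C$ rather than a single one), which is fine.
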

\begin{proof}
  Let $ρ$ be an accepting run of~$𝒯$ whose label is a normal sequence.
  This run ends in some strongly connected~$C$, that is, all states which
  are visited infinitely often by the run belong to the same strongly
  connected component~$C$. This component~$C$ must contain a final state
  because $ρ$ is accepting and by Proposition~\ref{pro:acceptnormal}, its
  spectral radius must be one.

  Conversely, let $C$ be a strongly component with a final state and
  spectral radius~$1$. By Proposition~\ref{pro:acceptnormal}, there is a
  final run~$ρ$ contained in~$C$ and whose label is a normal sequence~$x$.
  Suppose that $ρ$ starts from state~$q$ in~$C$.  Let $u$ be the label of a
  run from an initial state to~$q$.  The unique accepting run labeled
  by~$ux$ ends in~$C$ and $C$ must preserve normality.
\end{proof}

Consider for instance the transducer pictured in
Figure~\ref{fig:transducer2}.  It has two strongly connected components:
the one made of states $1,2,3$ and the one made of state~$4$.  The
corresponding adjacency matrices are
\begin{displaymath}
  \frac{1}{2}
  \left(
    \begin{array}{ccc}
      0 & 1 & 1 \\
      1 & 1 & 1 \\
      1 & 0 & 0 
    \end{array}
  \right)
  \quad\text{and}\quad
  \left(
    \begin{array}{c}
      \frac{1}{2}
    \end{array}
  \right)
\end{displaymath}
whose spectral radii are respectively $1$ and~$1/2$.  It follows that
the transducer preserves normality if an only if the transducer reduced
to the states $1,2,3$ does preserve normality.

In what follows we only consider strongly connected transducers.
Propositions \ref{pro:statefreq} and~\ref{pro:runfreq} have the following
consequence.  Let $𝒯$ be a unambiguous and strongly connected transducer.
If each transition has an empty output label, the output of any run is
empty and then $𝒯$ does not preserve normality.  Therefore, we assume that
transducers have at least one transition with a non empty output label.  By
Propositions \ref{pro:statefreq} and~\ref{pro:runfreq}, this transition is
visited infinitely often if the input is normal because the stationary
distribution $(π_qα_q)_{q ∈ Q}$ is positive.  This guarantees that if the
input sequence is normal, then the output sequence is infinite and $𝒯(x)$
is well-defined.

Note that the output labels of the transitions in~$𝒯$ from
Theorem~\ref{thm:weighted} may have arbitrary lengths.  We first describe
the construction of an equivalent transducer~$𝒯'$ such that all output
labels in~$𝒯'$ have length at most~$1$. We call this transformation
\emph{normalization} and it consists in \emph{replacing} each transition
$p \trans{a | v} q$ in~$𝒯$ such that $|v| ⩾ 2$ by $n$ transitions:
\begin{displaymath}
   p \trans{a | b_1} q_1
     \trans{\emptyword | b_2} q_2
     ⋯
     q_{n-1} \trans{\emptyword | b_n} q
\end{displaymath}
where $q_1, q_2, …, q_{n-1}$ are new states and $v = b_1⋯ b_n$.
We refer to $p$ as the parent of $q_1, ⋯, q_{n-1}$.

\begin{figure}[htbp]
  \begin{center}
  \begin{tikzpicture}[->,>=stealth',initial text=,semithick,auto,inner sep=1pt]
  \tikzstyle{every state}=[minimum size=0.4]
  \node[state,initial above,accepting] (q1) at (0,1.5) {$1$};
  \node[state] (q2) at (1.5,1.5) {$2$};
  \node[state] (q3) at (0,0) {$3$};
  \node[state] (q4) at (1.5,0) {$4$};
  \path (q1) edge[out=210,in=150,loop] node {$0|0$} (q1);
  \path (q1) edge [bend left=15] node {$1|10$} (q2);
  \path (q2) edge [bend left=15] node {$0|0$} (q1);
  \path (q1) edge node[swap] {$0|1$} (q3);
  \path (q3) edge [bend left=15] node {$1|\emptyword$} (q4);
  \path (q4) edge [bend left=15] node {$0|0,1|\emptyword$} (q3);
  \draw (q4) .. controls (1.1,0.75) and (0.4,0.75) .. node[swap,pos=0.25] {$1|1$} (q1);
  \end{tikzpicture}
  \hspace{2cm}
  \begin{tikzpicture}[->,>=stealth',initial text=,semithick,auto,inner sep=1pt]
  \tikzstyle{every state}=[minimum size=0.4]
  \node (init) at (-0.3,2.2) {};
  \node[state,accepting] (q1) at (0,1.5) {$1$};
  \node[state] (q2) at (1.5,1.5) {$2$};
  \node[state] (q3) at (0,0) {$3$};
  \node[state] (q4) at (1.5,0) {$4$};
  \node[state] (q5) at (0.75,2.5) {$5$};
  \path (init) edge (q1);
  \path (q1) edge[out=210,in=150,loop] node {$0|0$} (q1);
  \path (q1) edge node {$1|1$} (q5);
  \path (q2) edge node {$0|0$} (q1);
  \path (q5) edge node {$\emptyword|0$} (q2);
  \path (q1) edge node[swap] {$0|1$} (q3);
  \path (q3) edge[bend left=15] node {$1|\emptyword$} (q4);
  \path (q4) edge[bend left=15] node {$0|0,1|\emptyword$} (q3);
  \draw (q4) .. controls (1.1,0.75) and (0.4,0.75) .. node[swap,pos=0.25] {$1|1$} (q1);
  \end{tikzpicture}
  \end{center}
  \caption{The transducer $𝒯$ and its normalization $𝒯'$}
  \label{fig:normalized}
\end{figure}

To illustrate the construction, the normalized transducer obtained from the
transducer of Figure~\ref{fig:transducer} is pictured in
Figure~\ref{fig:normalized}.  State~$5$ has been added to split the
transition $1 \trans{1|10} 2$ into the finite run
$1 \trans{1|1} 5 \trans{\emptyword|0} 2$.

The main property of~$𝒯'$ is stated in the following lemma which
follows directly from the definition of normalization.
\begin{lemma} \label{lem:TequivTNorm}
  Both transducers $𝒯$ and~$𝒯'$ realize the same function, that is,
  $𝒯(x) = 𝒯'(x)$ for each sequence~$x$ in the domain of~$𝒯$.
\end{lemma}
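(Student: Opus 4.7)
The plan is to establish a natural bijection between accepting runs of $𝒯$ and accepting runs of $𝒯'$, and check that this bijection preserves both input and output labels.

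First I would describe the bijection. Given an accepting run
$\rho = q_0 \trans{a_1|v_1} q_1 \trans{a_2|v_2} q_2 \cdots$ in $𝒯$, I build $\rho'$ by replacing, for each index $i$ with $|v_i| \geq 2$, the transition $q_{i-1} \trans{a_i|v_i} q_i$ by the length-$|v_i|$ path
\begin{displaymath}
  q_{i-1} \trans{a_i | b_1} r_1 \trans{\emptyword | b_2} r_2 \cdots r_{|v_i|-1} \trans{\emptyword | b_{|v_i|}} q_i
\end{displaymath}
introduced by the normalization procedure (where $v_i = b_1 \cdots b_{|v_i|}$). Transitions with $|v_i| \leq 1$ are left unchanged. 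Conversely, given a run $\rho'$ in $𝒯'$, the key observation is that each new state $r_j$ created by normalization has, by construction, exactly one incoming transition and exactly one outgoing transition. Consequently, every time $\rho'$ enters such an $r_j$, it is forced to complete the expansion path up to the next state of $𝒯$; collapsing each such maximal path back to the original transition $p \trans{a|v} q$ yields a run $\rho$ in $𝒯$, and the two constructions are inverses of each other.

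Next I would verify that this bijection preserves input labels, output labels and acceptance. The input label is preserved because the $\emptyword$ input symbols contributed by the new transitions do not change the concatenation, and each original transition $p \trans{a|v} q$ contributes its single input symbol $a$ on the first new transition of its expansion. The output label is preserved because the concatenation $b_1 b_2 \cdots b_{|v|}$ equals $v$. For acceptance, observe that the initial states of $𝒯'$ coincide with those of $𝒯$ and the new states $r_j$ are neither initial nor final; hence $\rho'$ is initial iff $\rho$ is initial, and $\rho'$ visits a final state infinitely often iff $\rho$ does.

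Finally, since $𝒯$ is unambiguous, for each $x$ in its domain there is a unique accepting run in $𝒯$ with input label $x$, hence by the bijection a unique accepting run in $𝒯'$ with the same input label $x$ (in particular $𝒯'$ is also unambiguous, and the domains of $𝒯$ and $𝒯'$ coincide). Since the two runs have equal output labels, we conclude $𝒯(x) = 𝒯'(x)$. No genuine obstacle is expected here; the only point needing a touch of care is the one-in/one-out property of the new states $r_j$, which is precisely what makes the collapsing direction of the bijection well-defined and therefore what makes the equivalence more than a one-sided inclusion.
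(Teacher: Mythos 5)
Your proof is correct and is exactly the argument the paper leaves implicit: the paper states the lemma "follows directly from the definition of normalization" and gives no further proof, while you spell out the underlying run-by-run bijection (using the one-in/one-out property of the new states) and check preservation of input, output, and Büchi acceptance. Nothing in your write-up conflicts with the paper's construction; it is simply a fleshed-out version of the paper's tacit justification.
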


From the normalized transducer~$𝒯'$ we construct a weighted automaton~$𝒜$
with the same state set as~$𝒯'$.  For all states $p,q$ and for every symbol
$b ∈ B$ the transition $p \trans{b} q$ is defined in~$𝒜$.  To assign
weights to transitions in $𝒜$, we auxiliary assign weights to transitions
in~$𝒯'$ as follows.  Each transition in~$𝒯'$ of the form $p \trans{a|v} q$
where $v$ is either a symbol or the empty word has weight $α_q/(\#A)α_p$.
Each transition in~$𝒯'$ of the form $p \trans{\emptyword|b} q$ (and
starting from a newly added state) has weight~$1$.  The sum of weights of
transitions starting from each state~$p$ is~$1$.  Indeed, if $p$ is a state
of~$𝒯$, the weights of transitions starting from~$p$ are the entries in
line indexed by~$p$ of the stochastic matrix~$P$.  If $p$ is a newly added
state, there is only one transition starting from~$p$ which has weight~$1$.
We now consider separately transitions that generate empty output from
those that do not.

Consider the $Q × Q$ matrix $E$ whose $(p,q)$-entry is given for each
pair $(p,q)$ of states by
\begin{displaymath}
  E_{p,q} = ∑_{a∈ A} \weight_{𝒯'}(p \trans{a|\emptyword} q).
\end{displaymath}
Let $E^*$ be the matrix defined by $E^* = ∑_{k ⩾ 0} E^k$, where $E^0$ is
the identity matrix.  By convention, there is indeed an empty run from~$p$
to~$p$ for each state~$p$ and this run has weight~$1$.  The entry
$E^*_{p,q}$ is the sum of weights of all finite runs with empty output
going from~$p$ to~$q$.  The matrix~$E^*$ can be computed because it is the
solution of the linear equation $E^* = EE^* + I$ where $I$ is the identity
matrix.  This proves in particular that all its entries are rational
numbers.

For each symbol $b ∈ B$ consider the $Q × Q$ matrix $D_b$ whose
$(p,q)$-entry is given for each pair $(p,q)$ of states by
\begin{displaymath}
  (D_b)_{p,q} = ∑_{a ∈ A ⊎ \{\emptyword\}}
  \weight_{𝒯'}(p \trans{a | b} q).
\end{displaymath}
We define the weight of a transition $p \trans{b} q$ in~$𝒜$ as
\begin{equation} \label{eq:weightsA}
  \weight_𝒜 (p \trans{b} q) = (E^*D_b)_{p,q}.  
\end{equation}
To assign initial weights to states we consider the matrix
$\hat{P} = ∑_{b ∈ B} {E^*D_b}$.  It is proved below in
Lemma~\ref{lem:matstoch} that this matrix is stochastic. The initial vector
of~$𝒜$ is the stationary distribution~$\hat{π}$ of~$\hat{P}$, that is, the
line vector~$\hat{π}$ such that $\hat{π}\hat{P} = \hat{π}$.  We assign to
each state~$q$ the initial weight $\hat{π}_q$.  Finally we assign final
weight~$1$ to all states.

We give below the matrices $E$, $E^*$, $D_0$, $D_1$ and~$\hat{P}$ and the
initial vector~$\hat{π}$ of the weighted automaton obtained from the
transducer pictured in Figure~\ref{fig:normalized}.

\begin{displaymath}
  E = 
  \left(
    \begin{array}{ccccc}
      0 & 0 & 0 & 0 & 0 \\
      0 & 0 & 0 & 0 & 0 \\
      0 & 0 & 0 & 1 & 0 \\
      0 & 0 & \frac{1}{4} & 0 & 0 \\
      0 & 0 & 0 & 0 & 0 \\
    \end{array}
  \right)
  \qquad
  E^* = 
  \left(
    \begin{array}{ccccc}
      1 & 0 & 0 & 0 & 0 \\
      0 & 1 & 0 & 0 & 0 \\
      0 & 0 & \frac{4}{3} & \frac{4}{3} & 0 \\
      0 & 0 & \frac{1}{3} & \frac{4}{3} & 0 \\
      0 & 0 & 0 & 0 & 1 \\
    \end{array}
  \right)
\end{displaymath}
\begin{displaymath}
  D_0 = 
  \left(
    \begin{array}{ccccc}
      \frac{1}{2} & 0 & 0 & 0 & 0 \\
      1 & 0 & 0 & 0 & 0 \\
      0 & 0 & 0 & 0 & 0 \\
      0 & 0 & \frac{1}{4} & 0 & 0 \\
      0 & 1 & 0 & 0 & 0 \\
    \end{array}
  \right)
  \qquad
  D_1 = 
  \left(
    \begin{array}{ccccc}
      0 & 0 & \frac{1}{4} & 0 & \frac{1}{4} \\
      0 & 0 & 0 & 0 & 0 \\
      0 & 0 & 0 & 0 & 0 \\
      \frac{1}{2} & 0 & 0 & 0 & 0 \\
      0 & 0 & 0 & 0 & 0 \\
    \end{array}
  \right)
\end{displaymath}
\begin{displaymath}
  \hat{P} = E^*(D_0 + D_1) = 
  \left(
    \begin{array}{ccccc}
      \frac{1}{2} & 0 & \frac{1}{4} & 0 & \frac{1}{4} \\
      1 & 0 & 0 & 0 & 0 \\
      \frac{2}{3} & 0 & \frac{1}{3} & 0 & 0 \\
      \frac{2}{3} & 0 & \frac{1}{3} & 0 & 0 \\
      0 & 1 & 0 & 0 & 0 \\
    \end{array}
  \right)
\end{displaymath}

\begin{proposition} \label{pro:correct}
  The automaton~$𝒜$ computes frequencies, that is, for every normal word
  $x$ and any finite word $w$ in~$B^*$, $\weight_{𝒜}(w) = \freq(𝒯(x),w)$.
\end{proposition}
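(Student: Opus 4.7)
My plan is to reduce to the normalized transducer $\mathcal{T}'$ and then identify $\weight_{\mathcal{A}}(w)$ with a frequency in the unique accepting $\mathcal{T}'$-run on~$x$.

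By Lemma~\ref{lem:TequivTNorm} we may work with $\mathcal{T}'$ in place of~$\mathcal{T}$. Let $\rho'$ be the unique accepting run of~$\mathcal{T}'$ on~$x$; since every transition of~$\mathcal{T}'$ has output length at most one, the positions of $\mathcal{T}(x)=\mathcal{T}'(x)$ are in bijection with the output-producing transitions of~$\rho'$. Enumerate these as $t_1,t_2,\ldots$ and let $Y_i$ denote the target state of~$t_i$ (with $Y_0$ the initial state). An occurrence of $w=b_1\cdots b_n$ at output position~$i+1$ is exactly the event that $t_{i+1},\ldots,t_{i+n}$ produce $b_1,\ldots,b_n$ in order. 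Classified further by the states $(Y_i,Y_{i+1},\ldots,Y_{i+n})=(q_0,q_1,\ldots,q_n)$, the piece of $\rho'$ between $Y_{i+j-1}$ and $Y_{i+j}$ consists of some $\emptyword$-output transitions followed by one $b_j$-output transition, and the sum of $\mathcal{T}'$-weights of all such pieces from $q_{j-1}$ to~$q_j$ is by construction $(E^*D_{b_j})_{q_{j-1},q_j}$.

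The heart of the proof is then to show that the asymptotic per-output-position frequency of the event $(Y_i,\ldots,Y_{i+n})=(q_0,\ldots,q_n)$ with intermediate output~$w$ equals $\hat{\pi}_{q_0}\prod_{j=1}^n (E^*D_{b_j})_{q_{j-1},q_j}$. I would prove this by decomposing the event, over the numbers of intervening $\emptyword$-output transitions, into disjoint events indexed by specific finite runs~$\gamma'$ of~$\mathcal{T}'$, computing $\freq(\rho',\gamma')$ for each, and re-aggregating. A technical subtlety is that $\mathcal{T}'$ contains $\emptyword$-input transitions (those added during normalization) so that Proposition~\ref{pro:runfreq} is not directly applicable; however, each such transition is forced deterministically from its newly added source state, so every $\mathcal{T}'$-run projects uniquely to a $\mathcal{T}$-run and its frequency is inherited from Proposition~\ref{pro:runfreq} applied to~$\mathcal{T}$. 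After summing over fillings and rescaling from per-transition to per-output-position, the leading factor is $\hat{\pi}_{q_0}$, identified as the unique stationary distribution of~$\hat{P}$.

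Finally, summation over all tuples $(q_0,\ldots,q_n)$ yields
\begin{equation*}
  \freq(\mathcal{T}(x),w) = \sum_{q_0,\ldots,q_n} \hat{\pi}_{q_0}\prod_{j=1}^n (E^*D_{b_j})_{q_{j-1},q_j} = \weight_{\mathcal{A}}(w),
\end{equation*}
since every state of~$\mathcal{A}$ carries final weight~$1$. The main obstacle I anticipate is the normalization step passing from per-$\mathcal{T}'$-transition frequencies, where Proposition~\ref{pro:statefreq} would deliver the factor~$\pi'_{q_0}\alpha'_{q_0}$, to per-output-position frequencies, where $\hat{\pi}_{q_0}$ must appear instead; this reconciliation is precisely what the balance equation $\hat{\pi}\hat{P}=\hat{\pi}$, together with the density of output-producing transitions acting as a common denominator, is designed to guarantee.
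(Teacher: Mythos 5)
Your overall architecture matches the paper's (decompose $\rho'$ into maximal pieces each producing one output symbol, aggregate their weights into $(E^*D_b)_{p,q}$, then sum over state tuples), but the step you flag as "the heart of the proof" is exactly where your plan breaks, and the fix you gesture at does not work as stated. The event $(Y_i,\dots,Y_{i+n})=(q_0,\dots,q_n)$ concerns only occurrences of a filling run $\gamma'$ that are \emph{aligned} with the block factorization of $\rho'$, i.e.\ that start immediately after an output-producing transition. The quantity $\freq(\rho',\gamma')$ counts \emph{all} occurrences of $\gamma'$, including those that start in the middle of a block (preceded by an $\emptyword$-output transition); such occurrences belong to a different tuple, whose first state is the true block start, not $q_0$. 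If you sum $\freq(\rho',\gamma')$ over fillings and divide by the density of output-producing transitions, the leading factor you obtain is the (rescaled) per-transition frequency of the state $q_0$ in $\rho'$, not $\hat{\pi}_{q_0}$; by Lemma~\ref{lem:matstoch} the vector $\hat{\pi}$ is proportional to $\pi-\pi E$, which differs from the plain state-frequency vector whenever $E\neq 0$, so the discrepancy is real and the answer would be wrong. The balance equation $\hat{\pi}\hat{P}=\hat{\pi}$ cannot by itself convert unaligned frequencies into aligned ones; invoking it here is an assertion, not an argument.

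What is missing is precisely the content of the paper's separate (unnumbered) lemma: that the frequency of each state along the subsequence of block-starting states (the paper's $\rho''$, your $Y_i$) exists. This is not immediate from run frequencies in $\rho'$, because whether an occurrence of a state is a block boundary depends on the output of the \emph{preceding} transition; the paper resolves this by splitting each state $q$ into $q^{\emptyword}$ and $q^{o}$ (the transducer $\hat{\mathcal{T}}$), so that alignment becomes a property of the state, applying Proposition~\ref{pro:statefreq} there, and then identifying the resulting frequencies with $\hat{\pi}$ via stationarity with respect to $\hat{P}$ and uniqueness of the stationary distribution of an irreducible stochastic matrix. Your plan could be repaired more elementarily, e.g.\ by replacing $\freq(\rho',\gamma')$ with $\sum_{\tau}\freq(\rho',\tau\gamma')$ where $\tau$ ranges over output-producing transitions ending at $q_0$ (thereby forcing alignment), and then verifying that the resulting leading factor is proportional to $(\pi-\pi E)_{q_0}$, hence equal to $\hat{\pi}_{q_0}$ after normalization by the output density; but some argument of this kind must actually be carried out, and as written your proposal omits it.
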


The proof of the proposition requires some preliminary results.

Let us recall that a set of words~$L$ is called \emph{prefix-free} if no
word in~$L$ is a proper prefix of another word in~$L$.  As runs are defined
as sequences of (consecutive) transitions, this latter definition also
applies when $L$ is a set of runs.  We define $\freq(ρ, Γ)$ when $Γ$ is a
set of finite runs as follows.  Suppose that $ρ$ is the sequence
$τ_1τ_2τ_3⋯$ of transitions.  Then $\freq(ρ, Γ)$ is defined by
\begin{displaymath}
  \freq(ρ,Γ) = \lim_{n → ∞}
  \frac{\#\{ i < n : ∃ k ⩾ 0 \;\; τ_i ⋯ τ_{i+k} ∈ Γ\}}{n}.
\end{displaymath}
If $Γ$ is prefix-free (not to count twice the same start
position~$i$), the following equality holds
\begin{displaymath}
  \freq(ρ, Γ) = ∑_{γ∈Γ}{\freq(ρ,γ)}
\end{displaymath}
assuming that each limit of the right-hand sum does exist.  If $Γ$ is a set
of finite runs starting from the same state~$p$, the \emph{conditional
  frequency} of~$Γ$ in a run~$ρ$ is defined as the ratio between the
frequency of~$Γ$ in~$ρ$ and the frequency of~$p$ in~$ρ$, that is,
$\freq(ρ, Γ)/\freq(ρ, p)$.  Furthermore if $Γ$ is prefix-free, the
conditional frequency of~$Γ$ is the sum of the conditional frequencies of
its elements.

Let $x$ be a fixed normal word and let $ρ$ and~$ρ'$ be respectively the
runs in~$𝒯$ and~$𝒯'$ with label~$x$.  By Proposition~\ref{pro:statefreq},
the frequency $\freq(ρ, q)$ of each state~$q$ is $π_qα_q$ where $π$ and~$α$
and the left and right eigenvectors of the adjacency matrix of~$𝒯$ for the
eigenvalue~$1$.  The following lemma gives the frequency of states in~$ρ'$.

\begin{lemma} \label{lem:TTpfreq}
  There exists a constant~$C$ such that if $r$ is a state of~$𝒯$, then
  $\freq(ρ', r) = \freq(ρ, r)/C$ and if $r$ is newly created, then
  $\freq(ρ', r)$ is equal to $\freq(ρ', p)α_q/(\#A)α_p$ where $r$ comes
  from the splitting of a transition $p \trans{a|v} q$ in~$𝒯$.
\end{lemma}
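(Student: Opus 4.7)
The plan is to quantify the length dilation introduced by normalization and then to derive frequencies in $\rho'$ from frequencies of states and transitions in $\rho$. Write $\tau_1 \tau_2 \tau_3 \cdots$ for the consecutive transitions of $\rho$, with $\tau_i = p_i \trans{a_i|v_i} q_i$, and observe that each transition of $\mathcal{T}$ with $|v| \le 1$ is preserved in $\mathcal{T}'$, while a transition with $|v| = n \ge 2$ is replaced by a chain of $n$ consecutive transitions through $n-1$ fresh intermediate states. Consequently the first $N$ transitions of $\rho$ correspond in $\rho'$ to exactly the first $L_N = \sum_{i=1}^{N} \max(|v_i|, 1)$ transitions.

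By Proposition~\ref{pro:runfreq}, each transition $\tau = p \trans{a|v} q$ of $\mathcal{T}$ has a well-defined frequency $\freq(\rho, \tau) = \pi_p \alpha_q / \#A$ in $\rho$. Summing these contributions over all transitions of $\mathcal{T}$ yields
\begin{displaymath}
  C \;:=\; \lim_{N \to \infty} \frac{L_N}{N} \;=\; \sum_{\tau = p \trans{a|v} q} \max(|v|, 1) \cdot \frac{\pi_p \alpha_q}{\#A},
\end{displaymath}
a positive constant depending only on $\mathcal{T}$; this will be the announced dilation constant.

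It remains to count visits to each state of $\mathcal{T}'$ in a prefix of $\rho'$ of length~$L$. I would pick $N$ with $L_N \le L < L_{N+1}$; because the chains introduced by normalization have length bounded by the longest output label of $\mathcal{T}$, truncating $\rho'$ inside such a chain changes any state count by a bounded amount, a correction that vanishes after dividing by~$L$. If $r$ is an original state of $\mathcal{T}$, each visit of $\rho$ to $r$ induces exactly one visit of $\rho'$ to $r$ at the start of the corresponding chain, so $\occ{\rho'[1{:}L_N]}{r} = \occ{\rho[1{:}N]}{r}$ and dividing by $L_N$ while using $L_N/N \to C$ gives $\freq(\rho', r) = \freq(\rho, r)/C$. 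If $r$ is newly created by splitting a transition $\tau = p \trans{a|v} q$, it is visited in $\rho'$ exactly once per use of~$\tau$ in~$\rho$, whence $\freq(\rho', r) = \freq(\rho, \tau)/C = \pi_p \alpha_q / ((\#A) C)$. Combining this with the formula $\freq(\rho', p) = \pi_p \alpha_p / C$ obtained in the previous case immediately yields
\begin{displaymath}
  \freq(\rho', r) \;=\; \frac{\pi_p \alpha_q}{(\#A) C} \;=\; \freq(\rho', p) \cdot \frac{\alpha_q}{(\#A) \alpha_p},
\end{displaymath}
as claimed. The only technical subtlety is the boundary correction described above, which is routine thanks to the uniform bound on chain lengths and does not affect the limit.
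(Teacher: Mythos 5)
Your proposal is correct and follows essentially the same route as the paper: you take $C=\sum_{\tau}\max(1,|v_\tau|)\,\freq(\rho,\tau)$ (the dilation factor given by Proposition~\ref{pro:runfreq}) and note that original states, respectively newly created states, are visited once per corresponding visit in $\rho$, respectively once per use of the split transition. The only cosmetic difference is that you derive the ratio $\alpha_q/(\#A)\alpha_p$ directly from the explicit formulas $\pi_p\alpha_q/\#A$ and $\pi_p\alpha_p$ instead of invoking Corollary~\ref{cor:condfreq}, which is the same computation.
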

\begin{proof}
  Observe that there is a one-to-one relation between runs labeled with
  normal words in~$𝒯$ and in~$𝒯'$.  More precisely, each transition $τ$
  in~$ρ$ is replaced by $\max(1, |v_τ|)$ transitions in~$ρ'$ (where $v_τ$
  is the output label of~$τ$).
  
  By Proposition~\ref{pro:runfreq}, each transition of~$𝒯$ has a frequency
  in~$ρ$.  The first result follows by taking
  $C = ∑_{τ}{\freq(ρ, τ) ⋅ \max(1, |v_τ|)}$ where the summation is taken
  over all transitions~$τ$ of~$𝒯$ and $v_τ$ is implicitly the output label
  of~$τ$.  The second result follows from Corollary~\ref{cor:condfreq}
  stating that each transition $p \trans{a|v} q$ has a conditional
  frequency of~$α_q/(\#A)α_p$ in~$ρ$.
\end{proof}

For each pair $(p,q)$ of states and each symbol $b ∈ B$, consider the set
$Γ_{p,b,q}$ of runs from~$p$ to~$q$ in~$𝒯'$ that have empty output labels
for all their transitions but the last one, which has $b$ as output label.
\begin{displaymath}
  Γ_{p,b,q} = \{ p \trans{a_1|\emptyword} ⋯
                   \trans{a_n|\emptyword} q_n \trans{a_{n+1}|b} q : n⩾ 0, q_i
                   ∈ Q, a_i ∈ A ∪ \{\emptyword\}\} 
\end{displaymath}
and let $Γ$ be the union $⋃_{p,q∈ Q, b ∈ B}{Γ_{p,b,q}}$.  Note that the
set~$Γ$ is prefix-free.  Therefore, the run~$ρ'$ has a unique factorization
$ρ = γ_0 γ_1 γ_2 ⋯$ where each $γ_i$ is a finite run in~$Γ$ and the ending
state of~$γ_i$ is the starting state of~$γ_{i+1}$.  Let $(p_i)_{i⩾0}$ and
$(b_i)_{i ⩾ 0}$ be respectively the sequence of states and the sequence of
symbols such that $γ_i$ belongs to $Γ_{p_i,b_i,p_{i+1}}$ for each $i ⩾ 0$.
Let us call $ρ''$ the sequence $p_0p_1p_2⋯$ of states of~$𝒯'$.

\begin{lemma}
  For each state~$q$ of~$𝒯'$, the frequency $\freq(ρ'', q)$ does exist.
\end{lemma}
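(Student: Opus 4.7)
My plan is to express $\freq(\rho'', q)$ as a ratio of two frequencies living inside $\rho'$ and then to reduce those to transition frequencies in $\rho$, which are already known to exist by Proposition~\ref{pro:runfreq}. By construction, each block $\gamma_i \in \Gamma_{p_i,b_i,p_{i+1}}$ ends with its unique output-producing transition, whose target is $p_{i+1}$. Hence, setting $T(n) = |\gamma_0 \cdots \gamma_{n-1}|$, the quantity $\#\{0 \le i < n : p_i = q\}$ equals, up to an additive constant of at most one, the number of output-producing transitions of $\rho'[1{:}T(n)]$ whose target is $q$, whereas the total number of output-producing transitions in this prefix is exactly $n$. It therefore suffices to prove that the frequency $\alpha$ of output-producing transitions in $\rho'$ and the frequency $\beta_q$ of those ending at state $q$ are both well defined, with $\alpha > 0$; one then obtains $\freq(\rho'', q) = \beta_q / \alpha$ by dividing numerator and denominator by $T(n)$ and letting $n \to \infty$ (note that $T(n) \ge n \to \infty$ since every block has length at least one).

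To establish these frequencies, I would show that every transition $\tau'$ of $\mathcal{T}'$ has a well-defined frequency in $\rho'$. Each $\tau'$ admits a canonical \emph{parent} transition $\tau$ in $\mathcal{T}$: either $\tau = \tau'$ directly (when the original output had length at most one), or $\tau'$ is one of the $|v_\tau|$ consecutive transitions produced by the splitting of $\tau$ (when $|v_\tau| \ge 2$). In either case, the occurrences of $\tau'$ in $\rho'$ are in bijection with the occurrences of $\tau$ in $\rho$. Because $\mathcal{T}$ is strongly connected, unambiguous, and, thanks to Proposition~\ref{pro:decomposition} together with Proposition~\ref{pro:acceptnormal}, has spectral radius one, Proposition~\ref{pro:runfreq} gives $\freq(\rho, \tau)$. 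Converting between prefix lengths in $\rho$ and $\rho'$ via the constant $C$ of Lemma~\ref{lem:TTpfreq} then yields $\freq(\rho', \tau') = \freq(\rho, \tau)/C$. Summing $\freq(\rho', \tau')$ over the finitely many output-producing $\tau'$ gives $\alpha$, and restricting the sum to those ending at $q$ gives $\beta_q$. Positivity of $\alpha$ follows from the running assumption that $\mathcal{T}$ has at least one transition with non-empty output, combined with the strict positivity of $(\pi_q \alpha_q)_{q \in Q}$ already used in Lemma~\ref{lem:TTpfreq}.

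The principal obstacle I expect is the bookkeeping of boundary effects: the truncation at $\rho'[1{:}T(n)]$ lines up exactly with block boundaries by definition of $T(n)$, but the correspondence between prefixes of $\rho'$ and prefixes of $\rho$ under the splitting is only exact up to an $O(1)$ overhang coming from the fact that the final original transition may have been split into several pieces only partly lying inside the prefix. These $O(1)$ discrepancies disappear after dividing by $n \to \infty$, but must be tracked carefully to turn the heuristic ratio argument into a genuine limit statement, so that $\#\{i < n : p_i = q\}/n$ actually converges to $\beta_q/\alpha$ rather than merely being bracketed between its $\liminf$ and $\limsup$.
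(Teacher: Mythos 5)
Your proof is correct, but it follows a genuinely different route from the paper's. The paper establishes the lemma by building an auxiliary transducer $\hat{𝒯}$ in which every state $q$ of $𝒯$ is split into $q^{\emptyword}$ and $q^{o}$ according to whether the incoming transition has empty output, applying Proposition~\ref{pro:statefreq} to $\hat{𝒯}$, transferring the resulting state frequencies to the normalization $\hat{𝒯}'$ by the argument of Lemma~\ref{lem:TTpfreq}, and finally observing that $ρ''$ is obtained from $\hat{ρ}'$ by erasing the occurrences of the states $q^{\emptyword}$. You avoid the auxiliary transducer altogether: you identify the occurrences of $q$ among the first $n$ letters of $ρ''$ (up to an additive constant~$1$) with the output-producing transitions ending at~$q$ among the first $n$ blocks of the factorization of $ρ'$, you get the existence of the frequency in $ρ'$ of each individual transition of $𝒯'$ from Proposition~\ref{pro:runfreq} via the length-rescaling constant $C$ of Lemma~\ref{lem:TTpfreq} (the parent-transition bijection is valid because the states created by normalization are private to the transition being split), and you conclude by a ratio-of-densities argument, the denominator $α$ being positive thanks to the standing assumption that some transition of $𝒯$ has a non-empty output label together with the positivity of $(π_qα_q)_{q∈Q}$. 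Your route buys economy and explicitness: it reuses only what is already proved, it makes the deletion-density step that is implicit in the paper's last sentence explicit as the quotient $β_q/α$, and it even yields the value of $\freq(ρ'',q)$ rather than mere existence. The paper's construction buys a reusable device (recording one bit of finite memory in the states so that Proposition~\ref{pro:statefreq} applies directly to the marked positions), which adapts to other markings; note that its final step also tacitly needs the same positivity of the density of output-producing positions that you state and justify explicitly, so your bookkeeping is not wasted effort.
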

\begin{proof}
  The sequence~$ρ''$ is a subsequence of the sequence of states in the
  run~$ρ'$.  An occurrence of a state~$q$ in~$ρ'$ is removed whenever the
  output of the previous transition is empty.
  
  Consider the transducer $\hat{𝒯}$ obtained by splitting each state $q$
  of~$𝒯$ into two states $q^\emptyword$ and $q^o$ in such a way that
  transitions with an empty output label end in a state~$q^\emptyword$ and
  other transitions end in a state~$q^o$.  Then each transition
  $p \trans{a|v} q$ is replaced by either the two transitions
  $p^\emptyword \trans{a|v} q^\emptyword$ and
  $p^o \trans{a|v} q^\emptyword$ if $v$ is empty or by the two transitions
  $p^\emptyword \trans{a|v} q^o$ and $p^o \trans{a|v} q^o$ otherwise.  The
  state~$q_0^\emptyword$ becomes the new initial state and non reachable
  states are removed.  Let $\hat{ρ}$ be the run in~$\hat{𝒯}$ labeled
  with~$x$.  By Proposition~\ref{pro:statefreq}, the frequencies
  $\freq(\hat{ρ}, q^\emptyword)$ and $\freq(\hat{ρ}, q^o)$ do exist.  Now
  consider the normalization $\hat{𝒯}'$ of~$\hat{𝒯}$ and the run $\hat{ρ}'$
  in~$\hat{𝒯}'$ labeled with~$x$.  It can be shown that the frequencies
  $\freq(\hat{ρ}', q^\emptyword)$ and $\freq(\hat{ρ}', q^o)$ do exist by an
  argument similar to the proof of Lemma~\ref{lem:TTpfreq}.  The sequence
  $ρ''$ is obtained from~$\hat{ρ}'$ by removing each occurrence of states
  $q^\emptyword$ and keeping occurrences of states~$q^o$.  It follows that
  the frequency of each state does exist in~$ρ''$.
\end{proof}

\begin{proof}[Proof of Proposition~\ref{pro:correct}]
  By Corollary~\ref{cor:condfreq}, the conditional frequency in~$ρ$ of each
  finite run~$γ$ of length~$n$ from~$p$ to~$q$ is $α_q/(\#A)^nα_p$.  It
  follows that the conditional frequency of each finite run~$γ'$ in~$ρ'$ is
  equal to its assigned weight in~$𝒯'$ while defining~$𝒜$.  By
  Equation~(\ref{eq:weightsA}), the weight of the transition
  $p \trans{b} q$ in~$𝒜$ is exactly the conditional frequency of the
  set~$Γ_{p,b,q}$ for each triple $(p,b,q)$ in $Q × B × Q$.  More
  generally, the product of the weights of the transitions
  $p_0 \trans{b_1} p_1 ⋯ p_{n-1} \trans{b_n} p_{n}$ is equal to the
  conditional frequency of the set $Γ_{p_0,b_1,p_1} ⋯ Γ_{p_n,b_n,p_{n+1}}$
  in~$ρ'$.
  
  It remains to prove that the frequency of each state~$q$ in~$ρ''$ is
  indeed its initial weight in the automaton~$𝒜$.  Let us recall that the
  initial vector of~$𝒜$ is the stationary distribution of the stochastic
  matrix~$\hat{P}$ whose $(p,q)$-entry is the sum
  $∑_{b∈ B}{\weight_{𝒜}(p \trans{b} q)}$, which is the conditional
  frequency of~$pq$ (as a word of length~$2$) in~$ρ''$.  It follows that
  the frequencies of states in~$ρ''$ must be the stationary of the
  matrix~$P$.
  
  Since the frequency of a word $v = b_1⋯ b_n$ in~$𝒯'(x)$ is
  the same as the sum over all sequences $p_0,p_1…,p_{n+1}$ of the
  frequencies of $Γ_{p_0,b_1,p_1} ⋯ Γ_{p_n,b_n,p_{n+1}}$
  in~$ρ'$, it is the weight of the word~$v$ in the
  automaton~$𝒜$.
\end{proof}

\begin{lemma} \label{lem:matstoch}
  The matrix $\hat{P} = ∑_{b ∈ B} {E^*D_b}$ which has been used to define
  the initial weights is stochastic.  Furthermore, its stationary
  distribution~$\hat{π}$ is proportional to the vector $π - π E$ where $π$
  is the stationary distribution of~$E + ∑_{b ∈ B} {D_b}$.
\end{lemma}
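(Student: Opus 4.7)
The plan is to reduce both claims to the two-sided identity $E^*(I-E) = (I-E)E^* = I$, which follows from the Neumann-series expansion $E^* = \sum_{k\geq 0} E^k$ already invoked in the text (as the solution of $E^* = EE^* + I$). I would set $P = E + \sum_{b\in B} D_b$, so that the $(p,q)$-entry of $P$ is the total weight in $\mathcal{T}'$ of all transitions from $p$ to $q$. Since by construction the weights of the outgoing transitions at each state of $\mathcal{T}'$ sum to $1$, the matrix $P$ is stochastic; strong connectivity of $\mathcal{T}'$ makes it irreducible, so its stationary distribution $\pi$ is unique and strictly positive.

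For the first half of the lemma, I would simply apply $\hat{P}$ to the all-ones column vector $\mathbf{1}$:
\[
    \hat{P}\mathbf{1} = E^* \Bigl(\sum_{b\in B} D_b\Bigr)\mathbf{1} = E^*(P-E)\mathbf{1} = E^*(I-E)\mathbf{1} = \mathbf{1},
\]
using $P\mathbf{1} = \mathbf{1}$ and the telescoping identity. This shows that $\hat{P}$ is stochastic.

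For the second half, I would set $\hat{\pi} := \pi - \pi E = \pi(I-E)$ and verify directly that it is a left $1$-eigenvector of $\hat{P}$:
\[
    \hat{\pi}\hat{P} = \pi(I-E)E^*(P-E) = \pi(P-E) = \pi P - \pi E = \pi - \pi E = \hat{\pi}.
\]
The one point that requires attention is non-triviality. Rewriting $\hat{\pi} = \pi \sum_{b\in B} D_b$ using $\pi P = \pi$ shows that $\hat{\pi} \geq 0$, and it is non-zero because $\pi$ is strictly positive and at least one transition of $\mathcal{T}$ has non-empty output label (a standing hypothesis). After normalization, $\hat{\pi}$ is therefore the (unique) stationary distribution of the irreducible stochastic matrix $\hat{P}$, and it is proportional to $\pi - \pi E$ as claimed.

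I expect no serious obstacle: the whole argument is elementary matrix algebra around the splitting $P = E + (P-E)$ together with the two-sided Neumann identity for $E^*$. The only subtle checks are the two-sidedness $E^*(I-E) = (I-E)E^* = I$, which is automatic once one knows the Neumann series for $E^*$ converges, and the non-triviality of $\hat{\pi}$, which is handled by the positivity of $\pi$ combined with the standing assumption that $\mathcal{T}$ has at least one non-empty output transition.
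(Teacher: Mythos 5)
Your proof is correct and takes essentially the same route as the paper: split the stochastic matrix $P = E + \sum_{b\in B} D_b$ into its empty-output part $E$ and non-empty-output part $\sum_b D_b$, then use $E^*(I-E)=I$ to verify $\hat{P}\mathbf{1}=\mathbf{1}$ and $(\pi-\pi E)\hat{P}=\pi-\pi E$. Your non-triviality check is a welcome addition that the paper omits; only the passing claim that $\hat{P}$ itself is irreducible is slightly too strong (a state of $\mathcal{T}'$ with no incoming non-empty-output transition yields a zero column of $\hat{P}$), but this does not affect the conclusion since $\hat{\pi}$ vanishes on such states anyway.
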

\begin{proof}
  It is an easy observation that if $P_1$ and~$P_2$ are two square matrices
  with non-negative coefficients such that $P_1 + P_2$ is stochastic, then
  $P_1^*P_2 = ∑_{n ⩾ 0}P_1^nP_2$ is stochastic.  Indeed, if $𝟏$ is the
  vector $(1,…,1)$, it is easily checked that
  \begin{displaymath}
    P_1^*P_2𝟏 = P_1^*(𝟏 - P_1𝟏) = 𝟏.
  \end{displaymath}
  It is also easy to check that
  \begin{displaymath}
    (π - π P_1) P_1^*P_2 = π P_2 = π - π P_1.
  \end{displaymath}
  In our case, the matrices
  $P_1$ and~$P_2$ come from the splitting of transitions of~$𝒯'$ into the
  ones with empty output and the ones with non-empty output.  They are
  respectively equal to $P_1 = E$ and $P_2 = ∑_{b ∈ B}D_b$.
\end{proof}

\begin{proof}[Proofs of Theorems \ref{thm:weighted} and \ref{thm:preservation}]
  To complete the proof of Theorems \ref{thm:weighted}
  and~\ref{thm:preservation}, we exhibit an algorithm deciding in cubic
  time whether an input deterministic transducer preserves normality.  Let
  $𝒯$ be an unambiguous transducer $⟨ Q,A,B,Δ,I,F ⟩$.  By definition, its
  size is the sum $∑_{τ ∈ Δ}|τ|$, where the size of a single transition
  $τ = p \trans{a|v} q$ is~$|τ| = |av|$.  We consider the alphabets to be
  fixed so they are not taken into account when computing complexity.

  \begin{figure}[htbp]
    \begin{center}
      \begin{tikzpicture}[->,>=stealth',semithick,auto,inner sep=1.5pt]
      \tikzstyle{every state}=[minimum size=0.4]
      \node[state] (q6) at (0,0) {$q$} ;
      \node (qinit) at (0,0.8) {} ;
      \node (qfin) at (0,-0.8) {} ;
      \path (qinit) edge node {$1$} (q6);
      \path (q6) edge node {$1$} (qfin);
      \path (q6) edge[out=30,in=330,loop] node
        {$\begin{array}{c} b_1{:}1/n \\ \vdots \\ b_n{:}1/n \end{array}$} (q6) ;
      \end{tikzpicture}
    \end{center}
    \caption{Weighted automaton $ℬ$ such that $\weight_{ℬ}(w) = 1/(\#A)^{|w|}$}
    \label{fig:weighted3}
  \end{figure}
  
  By Proposition~\ref{pro:decomposition}, the algorithm decomposes the
  transducer into strongly connected components and make a list of all
  strongly connected components with a final state and spectral radius~$1$.
  This latter condition is checked in cubic time by computing the
  determinant of its adjacency matrix minus the identity matrix.  For each
  strongly connected component in the list, the algorithm checks whether
  it preserves normality or not.  This is achieved by computing the
  weighted automaton~$𝒜$ and checking that the weight of each word~$w$ is
  $1/(\#A)^{|w|}$.  This latter step is performed by comparing $𝒜$ with the
  weighted automaton~$ℬ$ such that $\weight_{ℬ}(w) = 1/(\#A)^{|w|}$.  The
  automaton~$ℬ$ is pictured in Figure~\ref{fig:weighted3}.
\begin{itemize}
\item[]
\textbf{Input:} $𝒯 = ⟨ Q,A,B,Δ,I,F ⟩$ 
  an input deterministic complete transducer.\\
\textbf{Output:} \texttt{True} if $𝒯$ preserves normality and
  \texttt{False} otherwise.\\
\textbf{Procedure:}
\begin{enumerate}
  \item[I.] Compute the strongly connected components of $𝒯$
  \item[II.] For each strongly connected component $S_i$ $𝒯$:
    \begin{enumerate}
    \item[1.] Compute the normalized transducer $𝒯'$, equivalent to $S_i$.
    \item[2.] Use $𝒯'$ to build the weighted automaton $𝒜$:
      \begin{enumerate}
      \item[a.] Compute the weights of the transitions of~$𝒜$.
        \begin{itemize}
        \item[] Compute the matrix $E$                          
        \item[] Compute the matrix $E^*$ solving $(I-E)X=I$  
        \item[] For each $b ∈ B$, for each $p,q ∈ Q$:
          \begin{itemize}
          \item[] compute the matrix $D_b$
          \item[] define the transition $p \trans{b} q$ with
            weight $(E^*D_b)_{p,q}$.
          \end{itemize} 
        \end{itemize} 
      \item[b.] Compute the stationary distribution $π$ of
        the Markov chain induced by~$𝒜$.
      \item[c.] Assign initial weight $π[i]$ to each state $i$,
        and let final weight be $1$ for all states.
      \end{enumerate}
    \item[3.] Compare $𝒜$ against the automaton~$ℬ$ using Schützenberger's
      algorithm \cite{CardonCrochemore80,Sakarovitch09b} to check whether
      they realize the same function.
    \item[4.] If they do not compute the same function, return \texttt{False}.
    \end{enumerate}
  \item[III.] Return \texttt{True}
\end{enumerate}
\end{itemize}

Now we analyze the complexity of the algorithm.  Computing recurrent
strongly connected components can be done in time $O(\#Q^2) ⩽ O(n^2)$
using Kosaraju's algorithm if the transducer is implemented with an
adjacency matrix \cite[Section~22.5]{Cormen09}.

We refer to the size of the component~$S_i$ as~$n_i$.  The cost of
normalizing the component is $O(n_i^2)$, mainly from filling the new
adjacency matrix.  The most expensive step when computing the transitions
and their weight is to compute $E^*$.  The cost is $O(n_i^3)$ to solve
the system of linear equations.  To compute the weights of the states we
have $O(n_i^3)$ to solve the system of equations to find the stationary
distribution.  Comparing the automaton to the one computing the expected
frequencies can be done in time $O(n_i^3)$ \cite{CardonCrochemore80} since
the coefficients of both automata are in $ℚ$.
\end{proof}

\section{Preservation of normality by selection}
\label{sec:selection}

The aim of this section is to show that in the case of selectors, the
weighted automaton given by the construction detailed in the previous
section has a special form.  This allows us to give another evidence that
oblivious prefix selection preserves normality.

\subsection{Oblivious prefix selection}

A \emph{selector} is a deterministic transducer such that each of its
transitions has one of the types $p \trans{a|a} q$ (type~I),
$p \trans{a|\emptyword} q$ (type~II) for a symbol $a ∈ A$. In a selector,
the output of a transition is either the symbol read by the transition
(type~I) or the empty word (type~II).  Therefore, it can be always assumed
that the output alphabet~$B$ is the same as the input alphabet~$A$.  It
follows that for each run $p \trans{u|v} q$, the output label~$v$ is a
subword, that is a subsequence, of the input label~$u$.  A selector is
pictured in Figure~\ref{fig:transducer2}.

Let us recall the link between oblivious prefix selection and selectors.
Let $x = a_1a_2a_3 ⋯ $ be a sequence over the alphabet~$A$.  Let $L ⊆ A^*$
be a set of finite words over~$A$. The word obtained by \emph{oblivious
  prefix selection} of~$x$ by~$L$ is
$x \prefsel L = a_{i_1}a_{i_2}a_{i_3} ⋯$ where $i_1,i_2,i_3,…$ is the
enumeration in increasing order of all the integers~$i$ such that the
prefix $a_1 a_2 ⋯ a_{i-1}$ belongs to~$L$.  This selection rule is called
\emph{oblivious} because the symbol~$a_i$ is not included in the considered
prefix.  If $L = A^*1$ is the set of words ending with a~$1$, the
sequence~$x \prefsel L$ is made of all symbols of~$x$ occurring after a~$1$
in the same order as they occur in~$x$.  If $L ⊆ A^*$ is a rational set,
the oblivious prefix selection by~$L$ can be performed by an oblivious
selector.  There is indeed an oblivious selector~$𝒮$ such that for each
input word~$x$, the output $𝒮(x)$ is the result $x \prefsel L$ of the
selection by~$L$.  This selector~$𝒮$ can be obtained from any deterministic
automaton~$𝒜$ accepting~$L$.  Replacing each transition $p \trans{a} q$
of~$𝒜$ by either $p \trans{a|a} q$ if the state~$p$ is accepting or by
$p \trans{a|\emptyword} q$ otherwise yields the selector~$𝒮$.  It can be
easily verified that the obtained transducer is an oblivious selector
performing the oblivious prefix selection by~$L$.

\begin{lemma}
  For each symbol~$a$, the matrix~$E^*D_a$ of the construction satisfies
  \begin{displaymath}
    E^*D_a 𝟏 = \frac{1}{\#A}𝟏
  \end{displaymath}
  where $𝟏$ is the vector $(1, …, 1)$.
\end{lemma}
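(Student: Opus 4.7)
I plan to reduce the stated identity to a pointwise equation. Since $E^{*} = (I-E)^{-1}$ is invertible, multiplying both sides of $E^{*}D_{a}\mathbf{1} = \tfrac{1}{\#A}\mathbf{1}$ by $I - E$ shows that the claim is equivalent to
\[
  D_{a}\mathbf{1} \;=\; \tfrac{1}{\#A}\bigl(\mathbf{1} - E\mathbf{1}\bigr),
\]
which I would verify entry by entry. Two structural observations drive the argument. First, since every output label in a selector has length at most one, the normalization step is vacuous and $\mathcal{T}' = \mathcal{T}$; each transition $p \trans{a|v} q$ therefore carries weight $\alpha_{q}/((\#A)\alpha_{p})$. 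Under the standing hypotheses of this section -- a strongly connected component with a final state and spectral radius one, built on a complete deterministic input automaton -- a routine argument (using that a strongly connected complete deterministic Büchi automaton with a final state accepts almost every sequence from every state) gives $\alpha_{p} = 1$ at every state, so every transition weighs exactly $1/\#A$. Second, the (oblivious) selectors considered here have the defining property that at each state all outgoing transitions share the same type, which partitions the state set $Q$ into a selecting class $Q_{\mathrm{I}}$ and a skipping class $Q_{\mathrm{II}}$.

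The pointwise check now splits into two cases. For $p \in Q_{\mathrm{I}}$, there is no type-II outgoing transition, so $(E\mathbf{1})_{p} = 0$, while determinism yields a unique transition $p \trans{a|a} q$ per input symbol $a$, giving $(D_{a}\mathbf{1})_{p} = 1/\#A$. For $p \in Q_{\mathrm{II}}$, the absence of type-I transitions gives $(D_{a}\mathbf{1})_{p} = 0$, while the $\#A$ type-II outgoing transitions contribute $\#A \cdot (1/\#A) = 1$ to $(E\mathbf{1})_{p}$. In both cases one obtains
\[
  (D_{a}\mathbf{1})_{p} \;=\; \tfrac{1}{\#A}\bigl(1 - (E\mathbf{1})_{p}\bigr),
\]
which is the desired reformulation; left-multiplying by $E^{*}$ then recovers the lemma.

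The main subtlety is not algebraic but structural: the argument hinges both on the uniformity $\alpha_{p} = 1$ and on the homogeneity of transition types at each state. Without the latter, a state mixing type-I and type-II transitions would produce a non-uniform output distribution and this right-eigenvector identity would fail; that the selectors in scope here fit the oblivious pattern is exactly what makes the clean two-case analysis go through.
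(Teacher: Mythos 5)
Your proof is correct, and it reaches the identity by a different final step than the paper. You verify the exact pointwise identity $D_a\mathbf{1} = \frac{1}{\#A}(\mathbf{1} - E\mathbf{1})$ by the two-case analysis on selecting versus skipping states (using $\alpha_q = 1$, determinism, completeness, and the oblivious property that all transitions leaving a state share one type), and then left-multiply by $E^* = (I-E)^{-1}$. The paper performs essentially the same case analysis on state types, but only extracts from it the weaker fact that the row sums of $D_a$ (hence of $E^*D_a$) do not depend on the symbol $a$; it then concludes by invoking the stochasticity of $\hat{P} = \sum_{b} E^*D_b$ from Lemma~\ref{lem:matstoch}, so that the $\#A$ equal row sums must each be $1/\#A$. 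Your route is more self-contained (no appeal to Lemma~\ref{lem:matstoch}) and it is in fact the right-hand analogue of the identity $\mathbf{1}D_a = \frac{1}{\#A}(\mathbf{1} - \mathbf{1}E)$ that the paper proves later for non-oblivious selection by group sets, which makes the parallel between the two selection results more transparent; the paper's route is slightly more economical in that it never needs the exact values of the row sums, only their independence of $a$, and it reuses machinery already established. Your treatment of $\alpha_q = 1$ is, if anything, more careful than the paper's (you note the Büchi condition needs the measure-one recurrence argument), and your closing remark that a state mixing type-I and type-II transitions would break the identity is accurate and consistent with the failure of normality preservation under non-oblivious selection discussed in the following subsection.
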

\begin{proof}
  In a deterministic automaton, each sequence is the label of an infinite
  run starting from each state~$q$.  This means that $α_q$ is equal to~$1$
  for each state~$q$.  We first consider the matrix~$D_a$ for each
  symbol~$a$ in the alphabet $A = B$.  Let $p$ be fixed state of the
  selector.  If the transitions starting from~$p$ have type~I, them the
  entry $(D_a)_{p,q}$ is equal to $\#\{ a : p \trans{a} q\}/\#A$ for each
  state~$q$.  If the transitions starting from~$p$ have type~I, them the
  entry $(D_a)_{p,q}$ is equal to zero for each state~$q$.  In the former
  case $∑_{q ∈ Q} (D_a)_{p,q} = 1$ and in the latter case
  $∑_{q ∈ Q} (D_a)_{p,q} = 0$.  Note that this sums do not depend on the
  symbol~$b$.  It follows that for symbols $a$ and~$b$
  \begin{displaymath}
    ∑_{q ∈ Q} (E^*D_a)_{p,q} = ∑_{r ∈ Q}E^*_{p.r}∑_{q ∈ Q} (D_a)_{r,q}
   =  ∑_{r ∈ Q}E^*_{p.r}∑_{q ∈ Q} (D_b)_{r,q}  = ∑_{q ∈ Q} (E^*D_b)_{p,q}.
  \end{displaymath}
  Since the matrix $\hat{P} = ∑_{b ∈ B} {E^*D_b}$ is stochastic by
  lemma~\ref{lem:matstoch}, the sum $∑_{q ∈ Q} (E^*D_a)_{p,q}$ is equal to
  $1/\#A$ for each symbol $a ∈ A$ and each state $p ∈ Q$. This is exactly
  the claimed equality.
\end{proof}

As a corollary, we get Agafonov's theorem \cite{Agafonov68}.
\begin{corollary}
  Oblivious prefix selection by a rational set preserves normality.
\end{corollary}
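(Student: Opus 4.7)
The plan is to combine the previous lemma with the machinery of Section~\ref{sec:proofs}. Given a rational set~$L$, we first turn a deterministic automaton accepting~$L$ into an oblivious selector~$\mathcal{S}$ as described at the beginning of this section. Since $\mathcal{S}(x) = x \prefsel L$ for every input sequence~$x$, it suffices to show that $\mathcal{S}$ preserves normality.

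By Proposition~\ref{pro:decomposition}, this reduces to showing that every strongly connected component of~$\mathcal{S}$ that has a final state and spectral radius~$1$ preserves normality. Fix such a component and let $\mathcal{A}$ denote the weighted automaton produced by the construction of Section~\ref{sec:proofs}, with row vector $\hat{\pi}$ of initial weights, column vector $\mathbf{1}$ of final weights, and matrices $\hat{D}_b := E^{*}D_b$ assigning weights to transitions. By Proposition~\ref{pro:correct}, for any normal input~$x$ in the domain of~$\mathcal{S}$ and any word $w = a_1 \cdots a_n \in A^{*}$, one has
\begin{displaymath}
  \freq(\mathcal{S}(x), w) = \weight_{\mathcal{A}}(w) = \hat{\pi}\, \hat{D}_{a_1} \cdots \hat{D}_{a_n}\, \mathbf{1}.
\end{displaymath}

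The previous lemma yields $\hat{D}_a \mathbf{1} = \frac{1}{\#A}\mathbf{1}$ for each symbol $a \in A$. A one-line induction on~$n$ then gives $\hat{D}_{a_1} \cdots \hat{D}_{a_n} \mathbf{1} = (\#A)^{-n}\mathbf{1}$, and since $\hat{\pi}$ is the stationary distribution of the stochastic matrix~$\hat{P}$, one has $\hat{\pi}\mathbf{1} = 1$. Therefore $\weight_{\mathcal{A}}(w) = (\#A)^{-|w|}$, which is exactly the uniform frequency characterising normality of $\mathcal{S}(x)$ over~$A$.

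Essentially all of the technical content has been absorbed into the previous lemma and into Proposition~\ref{pro:correct}; the only remaining point to verify carefully is that the selected component does have spectral radius~$1$ whenever it is visited infinitely often by a normal input, which is guaranteed by Proposition~\ref{pro:acceptnormal}, and that the output $\mathcal{S}(x)$ is indeed infinite, which follows from the discussion at the start of Section~\ref{sec:proofs} as soon as the component contains at least one type~I transition. If the component contains no type~I transition, its contribution to the output is empty and it can be discarded from the analysis.
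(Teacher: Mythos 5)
Your proof is correct and follows essentially the same route as the paper: identify $\freq(\mathcal{S}(x),w)$ with $\weight_{\mathcal{A}}(w)=\hat{\pi}\,E^*D_{a_1}\cdots E^*D_{a_n}\mathbf{1}$ and collapse the product to $(\#A)^{-n}$ using the lemma $E^*D_a\mathbf{1}=\frac{1}{\#A}\mathbf{1}$ together with $\hat{\pi}\mathbf{1}=1$. The only difference is that you spell out the reduction to strongly connected components via Proposition~\ref{pro:decomposition} and the degenerate case of components with no type~I transitions, which the paper's terse proof leaves implicit.
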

\begin{proof}
  The weight $\weight_{𝒜}(w)$ computed by the weighted automaton~$𝒜$ for a
  word $w = a_1 ⋯ a_n$ is equal to
  \begin{displaymath}
    \hat{π} E^*D_{a_1} E^* D_{a_2} ⋯ E^* D_{a_n} 𝟏
  \end{displaymath}
  where $\hat{π}$ is the stationary distribution of~$\hat{P}$.  By the
  previous lemma, this is is equal to $1/(\#A)^n$.
\end{proof}

\subsection{Non-oblivious prefix selection}

Let $x = a_1a_2a_3 ⋯ $ be an infinite word over alphabet~$A$.  Let
$L ⊆ A^*$ be a set of finite words over~$A$. The word obtained by
\emph{non-oblivious prefix selection} of $x$ by $L$ is
$x \noprefsel L = a_{i_1}a_{i_2}a_{i_3} ⋯$ where $i_1,i_2,i_3,…$ is the
enumeration in increasing order of all the integers~$i$ such that the
prefix $a_1 a_2a_3 ⋯ a_i$ including $a_i$ belongs to~$L$.  Non-oblivious
selection does not preserve normality in general.  If $L = A^*1$ is the
set of words ending with a~$1$, the non-oblivious selection selects only
$1$s.

Each deterministic automaton accepting the set~$L$ can be turn into a
selector performing the selection by~$L$.  Replacing each transition
$p \trans{a} q$ by a transition $p \trans{a|a} q$ if the state~$q$ is final
and by $p \trans{a|\emptyword} q$ otherwise and keeping everything else
unchanged yields a selector.  Note that this selector might not be
oblivious.

We now introduce a classical class of rational sets called group sets.  A
\emph{group automaton} is a deterministic automaton such that each symbol
induces a permutation of the states.  By inducing a permutation, we mean
that, for each symbol~$a$, the function which maps each state~$p$ to the
state~$q$ such that $p \trans{a} q$ is a permutation of the state set.  Put
another way, if $p \trans{a} q$ and $p' \trans{a} q$ are two transitions of
the automaton, then $p = p'$.  A rational set~$L ⊆ A^*$ is called a
\emph{group set} if $L$ is recognized by a group automaton.  It is well
know that a rational set is a group set if and only if its syntactic
monoid is a group.

The following lemma is straightforward.
\begin{lemma} \label{lem:uniformgroup}
  The stationary distribution of a group automaton is the uniform
  distribution.
\end{lemma}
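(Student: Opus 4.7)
The plan is to reduce the lemma to the observation that the adjacency matrix of a group automaton is doubly stochastic, and that the unique stationary distribution of a doubly stochastic irreducible Markov chain is uniform. Throughout, I assume the group automaton $𝒜$ is trim and strongly connected, so that the Markov chain framework of the earlier sections applies.

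First, I would note that a group automaton is deterministic and complete: each symbol induces a (total) permutation of the state set, so from every state every sequence is the label of a unique infinite run. Hence the sets $F_q$ equal $A^ℕ$, giving $α_q = μ(F_q) = 1$ for every state $q$. Consequently the Markov chain matrix $P$ defined by $P_{p,q} = M_{p,q} α_q/α_p$ coincides with the adjacency matrix $M$, and the normalization $\sum_q π_q α_q = 1$ forces $\sum_q π_q = 1$, so the stationary distribution $(π_q α_q)_{q ∈ Q}$ of the Markov chain is simply $π$.

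Second, I would verify that $M$ is doubly stochastic. Determinism and completeness already give that each row sums to $1$: for every state $p$, $\sum_q \#\{a : p \trans{a} q\} = \#A$. For the column sums, the key point is the group property. Fix a state $q$ and a symbol $a$. Since $a$ induces a permutation of $Q$, there is exactly one state $p$ with $p \trans{a} q$. Summing over $a$,
\[
\sum_{p ∈ Q} \#\{a ∈ A : p \trans{a} q\} = \#A,
\]
so $\sum_p M_{p,q} = 1$. Hence $M$ is doubly stochastic.

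Third, the uniform vector $u = (1/\#Q, …, 1/\#Q)$ satisfies $uM = u$, so it is a left eigenvector of $M$ for the eigenvalue $1$. Since $𝒜$ is strongly connected, $M$ is irreducible, and the Perron–Frobenius left eigenvector for eigenvalue $1$ is unique up to a positive scalar; together with the normalization $\sum_q π_q = 1$ this forces $π = u$. Therefore $(π_q α_q)_{q ∈ Q}$ is the uniform distribution. The only mildly subtle step is the column-sum computation, but it is an immediate consequence of the definition of a group automaton.
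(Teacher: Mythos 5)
Your proof is correct. The paper in fact gives no proof of this lemma at all — it is dismissed as ``straightforward'' — so your argument fills a gap rather than diverging from an existing one, and the route you take (rows of $M$ sum to $1$ by determinism and completeness, columns sum to $1$ because each symbol acts as a permutation and hence each state has exactly one incoming $a$-transition per symbol, so $M$ is doubly stochastic; irreducibility then pins the stationary vector to the uniform one) is exactly the intended computation, matching how the lemma is used later, where the stationary distribution of $E + \sum_{b} D_b = P = M$ is claimed proportional to $\mathbf{1}$. Two small remarks: your claim that $F_q = A^{\mathbb{N}}$ is literally an overstatement, since $F_q$ consists of sequences whose run from $q$ is \emph{final} in the B\"uchi sense; what is true (and all you need) is that $\mu(F_q)=1$ for a trim, strongly connected, complete deterministic automaton, because almost every sequence labels a run visiting every state infinitely often — note the paper itself asserts $\alpha_q = 1$ for deterministic selectors with the same gloss. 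Also, your standing assumption of strong connectedness is harmless and in fact automatic for an accessible group automaton: if $p \trans{w} q$ then, the permutation induced by $w$ having finite order $m$, one has $q \trans{w^{m-1}} p$, so reachability is symmetric.
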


\begin{lemma}
  For each symbol~$a$, the matrix~$E^*D_a$ of the construction satisfies
  \begin{displaymath}
    𝟏D_a  = \frac{1}{\#A}(𝟏 - 𝟏E)
  \end{displaymath}
  where $𝟏$ is the vector $(1, …, 1)$.
\end{lemma}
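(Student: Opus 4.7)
The plan is to establish the identity entrywise, working with the selector constructed from a group automaton recognising the group set $L$. Since the input automaton of the selector is deterministic, the argument recalled at the start of the previous subsection shows that $\alpha_q = 1$ for every state $q$, and consequently every transition of the normalised transducer $\mathcal{T}'$ receives the weight $1/(\#A)$ in the construction of $\mathcal{A}$.

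I would first classify each transition $p \trans{c} q$ of the group automaton according to whether $q$ is final: if $q$ is final the selector creates a type~I transition $p \trans{c|c} q$, contributing to $D_c$; otherwise it creates a type~II transition $p \trans{c|\emptyword} q$, contributing to $E$. The essential combinatorial ingredient is the defining property of a group automaton: for each symbol $c$ and each state $q$ there is exactly one state $p$ with $p \trans{c} q$.

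Next I would evaluate the two row vectors componentwise at a fixed state $q$. For $(\mathbf{1} D_a)_q = \sum_p (D_a)_{p,q}$, only type~I transitions contribute, and they exist only when $q$ is final; the group property then produces a unique predecessor under $a$, so the sum equals $1/(\#A)$ if $q$ is final and $0$ otherwise. For $(\mathbf{1} E)_q = \sum_p E_{p,q}$, only type~II transitions contribute, and they exist only when $q$ is not final; summing the unique $c$-predecessor over all $\#A$ symbols yields $\#A \cdot (1/\#A) = 1$ if $q$ is not final, and $0$ if $q$ is final. Combining, $\frac{1}{\#A}(\mathbf{1} - \mathbf{1}E)_q$ equals $1/(\#A)$ exactly when $q$ is final and vanishes otherwise, which is precisely $(\mathbf{1} D_a)_q$.

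The whole argument is a double case analysis on the finality of $q$, and the only point requiring care is verifying that each transition of the selector carries the weight $1/(\#A)$, which follows from $\alpha_q = 1$ for all $q$; no serious obstacle is expected. I anticipate that this lemma is the missing ingredient to derive, in analogy with Agafonov's theorem, that non-oblivious prefix selection by a group set preserves normality, since multiplying $\weight_{\mathcal{A}}(w) = \hat{\pi} E^* D_{a_1} E^* D_{a_2} \cdots E^* D_{a_n} \mathbf{1}$ on the left by $\mathbf{1}$ (via Lemma~\ref{lem:uniformgroup}, which gives $\hat{\pi}$ proportional to $\mathbf{1}$) reduces successively using the identity just proved.
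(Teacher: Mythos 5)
Your proof is correct, and it reaches the identity by a slightly different route than the paper. You verify the equality entrywise: fixing a state $q$, you use the permutation property of the group automaton (each state has exactly one incoming $a$-transition for each symbol $a$) together with $\alpha_q=1$ (so every transition has weight $1/\#A$) to compute $(\mathbf{1}D_a)_q$ and $(\mathbf{1}E)_q$ directly, splitting on whether $q$ is final, i.e.\ on whether its incoming transitions have type~I or type~II. The paper instead observes that the column sums of $D_a$ do not depend on the symbol $a$, so that $\mathbf{1}D_a=\mathbf{1}\bigl(\sum_{b}D_b\bigr)/\#A$, and then obtains $\mathbf{1}\sum_b D_b=\mathbf{1}-\mathbf{1}E$ from Lemma~\ref{lem:uniformgroup}: the uniform vector is (proportional to) the stationary distribution of $E+\sum_b D_b$, i.e.\ $\mathbf{1}(E+\sum_b D_b)=\mathbf{1}$. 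The underlying combinatorial fact is the same in both arguments—each state has exactly $\#A$ incoming transitions, one per symbol, each of weight $1/\#A$—but your version is more elementary and self-contained (it does not invoke Lemma~\ref{lem:uniformgroup} at all, in effect reproving the needed instance of it when you compute $(\mathbf{1}E)_q$), whereas the paper's version makes the link to stationarity explicit, which is then reused in the corollary. One small point to keep as stated, not a gap relative to the paper: your claim that every transition of $\mathcal{T}'$ has weight $1/\#A$ rests on $\alpha_q=1$ for all $q$, the same justification via determinism and completeness that the paper uses in the oblivious case, and for a selector the normalization does not create new states since output labels have length at most one.
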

\begin{proof}
  We consider the matrix~$D_a$ for each symbol~$a$ in the alphabet $A = B$.
  Let $q$ be fixed state of the selector.  If the transitions ending in~$q$
  have type~I, them the entry $(D_a)_{p,q}$ is equal to
  $\#\{ a : p \trans{a} q\}/\#A$ for each state~$p$.  If the transitions
  ending in~$q$ have type~I, them the entry $(D_a)_{p,q}$ is equal to zero
  for each state~$q$.  In the former case $∑_{p ∈ Q} (D_a)_{p,q} = 1$ and in
  the latter case $∑_{p ∈ Q} (D_a)_{p,q} = 0$.  Note that this sums do not
  depend on the symbol~$b$.  $D_a = ∑_{b ∈ A}{D_b}/\#A$.  The claimed
  equality follows from the fact that $𝟏$ is proportional to the stationary
  distribution of $E + ∑_{b ∈ A}{D_b}$ by Lemma~\ref{lem:uniformgroup}.
\end{proof}

The following result states that if $L$ is a group set, the non-oblivious
selection by~$L$ preserves normality.  It has been proved in
\cite{CartonVandehey19}.

\begin{corollary} 
  Non-oblivious prefix selection by a rational group set preserves normality.
\end{corollary}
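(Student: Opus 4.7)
The plan is to mirror the oblivious case: show that the weighted automaton $\mathcal{A}$ produced by the construction assigns weight $1/(\#A)^{|w|}$ to every word $w \in A^*$, so that by Proposition~\ref{pro:correct} every normal input is sent to a sequence with the correct block frequencies and is therefore normal. By Proposition~\ref{pro:decomposition} I may restrict attention to a single strongly connected component with spectral radius~$1$.

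Recall that the weight of a word $w = a_1 \cdots a_n$ in $\mathcal{A}$ factors as
$\weight_\mathcal{A}(w) = \hat\pi\, E^* D_{a_1}\, E^* D_{a_2} \cdots E^* D_{a_n}\, \mathbf{1}$.
The central claim I aim to prove is the ``left-action'' identity $\hat\pi\, E^* D_a = \tfrac{1}{\#A}\,\hat\pi$ for each symbol $a$. Iterating this $n$ times and using $\hat\pi \mathbf{1} = 1$ immediately gives $\weight_\mathcal{A}(w) = 1/(\#A)^n$, which matches $\weight_\mathcal{B}(w)$, and equivalence of $\mathcal{A}$ and $\mathcal{B}$ via Schützenberger's algorithm finishes the argument.

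To establish the key identity I would combine three ingredients. First, Lemma~\ref{lem:uniformgroup} says that the stationary distribution of a group automaton is uniform, so the stationary distribution $\pi$ of $E + \sum_b D_b$ is proportional to $\mathbf{1}$. Second, by Lemma~\ref{lem:matstoch}, $\hat\pi$ is proportional to $\pi - \pi E$, hence to $\mathbf{1}(I-E)$; write $\hat\pi = c\,\mathbf{1}(I-E)$. Third, the resolvent identity $E^* = I + E E^*$ yields the telescoping relation $(I-E)E^* = I$, so that $\hat\pi\, E^* = c\,\mathbf{1}$. Feeding this into the previous lemma's identity $\mathbf{1} D_a = \tfrac{1}{\#A}(\mathbf{1} - \mathbf{1}E)$ gives
$\hat\pi\, E^* D_a = c\,\mathbf{1} D_a = \tfrac{c}{\#A}\,\mathbf{1}(I-E) = \tfrac{1}{\#A}\hat\pi$,
as required.

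The main delicacy I expect is the reduction to a strongly connected selector: the non-oblivious selector extracted from a group automaton need not be strongly connected as a whole, so one must check that the components picked out by Proposition~\ref{pro:decomposition} retain enough structure for Lemma~\ref{lem:uniformgroup} to apply. Since each symbol acts as a permutation on the full state set, the state set decomposes into transition-monoid orbits on which the relevant symbols still act by permutations, so the uniform-stationary-distribution argument goes through componentwise, and the chain of identities above then delivers the corollary.
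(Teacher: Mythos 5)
Your proposal is correct and follows essentially the same route as the paper: it combines Lemma~\ref{lem:uniformgroup} and Lemma~\ref{lem:matstoch} to identify $\hat\pi$ with a multiple of $\mathbf{1}(I-E)$, and then uses the identity $\mathbf{1}D_a=\tfrac{1}{\#A}(\mathbf{1}-\mathbf{1}E)$ together with $(I-E)E^*=I$ to telescope $\hat\pi\,E^*D_{a_1}\cdots E^*D_{a_n}\mathbf{1}$ down to $1/(\#A)^n$, which is exactly the paper's computation written out step by step. Your closing worry about strong connectedness is essentially moot: in a trim group automaton each letter acts as a permutation of finite order, so the accessible part is already strongly connected and Lemma~\ref{lem:uniformgroup} applies directly.
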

\begin{proof}
  By Lemmas \ref{lem:uniformgroup} and~\ref{lem:matstoch}, the stationary
  distribution of the weighted automaton~$\mathcal{A}$ is $C(𝟏 - 𝟏E)$ where
  the constant~$C$ is chosen such that the coordinates of $C(𝟏 - 𝟏E)$ sum
  up to~$1$.  The weight $\weight_{𝒜}(w)$ computed by the weighted
  automaton~$𝒜$ for a word $w = a_1 ⋯ a_n$ is equal to
  \begin{displaymath}
    C(𝟏 - 𝟏E) E^*D_{a_1} E^* D_{a_2} ⋯ E^* D_{a_n} 𝟏 
  \end{displaymath}
  which is equal to to $1/(\#A)^n$ by the previous lemma.
\end{proof}

\section*{Conclusion}

The first result of the paper provides a weighted automaton which gives the
limiting frequency of each block in the output of a normal input.  This
automaton can be used to check another property of this invariant.  It can
be decided, for instance, whether this measure is a Bernoulli measure.
This boils down to checking whether the minimal automaton has only a single
state.

In this work, it is assumed that the input of the transducer is normal,
that is generic for the uniform measure.  It seems that the results can be
extended to the more general setting of Markovian measures.  The case of
hidden Markovian measure, that is, measures computed by weighted automata,
seems however more involved \cite{HanselPerrin90}.

\section*{Acknowledgements}

The author would like to thank Verónica Becher for many fruitful
discussions and suggestions.  The author is a member of the IRP SINFIN,
CONICET/Universidad de Buenos Aires--CNRS/Université de Paris and he is
supported by the ECOS project PA17C04.  The author is also partially funded
by the DeLTA project (ANR-16-CE40-0007).

\bibliographystyle{plain}
\bibliography{algorithm}

\begin{thebibliography}{10}

\bibitem{Agafonov68}
V.~N. Agafonov.
\newblock Normal sequences and finite automata.
\newblock {\em Soviet Mathematics Doklady}, 9:324--325, 1968.

\bibitem{BecherCarton18}
V.~Becher and O.~Carton.
\newblock Normal numbers and computer science.
\newblock In {\em Sequences, Groups, and Number Theory}, pages 233--269.
  Springer, 2018.

\bibitem{Borel09}
{\'E}.~Borel.
\newblock Les probabilit{\'e}s d{\'e}nombrables et leurs applications
  arithm{\'e}tiques.
\newblock {\em Rend. Circ. Mat. Palermo}, 27(2):247--271, 1909.

\bibitem{Bremaud08}
P.~Br{\'e}maud.
\newblock {\em Markov Chains: Gibbs Fields, Monte Carlo Simulation, and
  Queues}.
\newblock Springer, 2008.

\bibitem{Bugeaud12}
Y.~Bugeaud.
\newblock {\em Distribution modulo one and {D}iophantine approximation}, volume
  193 of {\em Cambridge Tracts in Mathematics}.
\newblock Cambridge University Press, Cambridge, 2012.

\bibitem{CardonCrochemore80}
A.~Cardon and M.~Crochemore.
\newblock D{\'e}termination de la repr{\'e}sentation standard d'une s{\'e}rie
  reconnaissable.
\newblock {\em {ITA}}, 14(4):371--379, 1980.

\bibitem{CartonOrduna}
O.~Carton and E.~Orduna.
\newblock Preservation of normality by transducers.
\newblock {\em CoRR}, abs/1904.09133, 2019.

\bibitem{CartonVandehey19}
O.~Carton and J.~Vandehey.
\newblock Preservation of normality by non-oblivious group selection.

\bibitem{Champernowne33}
D.~G. Champernowne.
\newblock The construction of decimals normal in the scale of ten.
\newblock {\em Journal of the London Mathematical Society}, 1(4):254--260,
  1933.

\bibitem{ChoffrutGrigorieff99}
Ch. Choffrut and S.~Grigorieff.
\newblock {\em Uniformization of Rational Relations}, pages 59--71.
\newblock Springer, 1999.

\bibitem{Cormen09}
T.~H. Cormen, Ch.~E. Leiserson, R.~L. Rivest, and C.~Stein.
\newblock {\em Introduction to Algorithms, Third Edition}.
\newblock The MIT Press, 3rd edition, 2009.

\bibitem{DowneyHirschfeldt11}
R.~G. Downey and D.~Hirschfeldt.
\newblock {\em Algorithmic randomness and complexity}.
\newblock Theory and Applications of Computability. New York, NY: Springer.
  xxvi, 855~p., 2010.

\bibitem{HanselPerrin90}
G.~Hansel and D.~Perrin.
\newblock Mesures de probabilit{\'e} rationnelles.
\newblock In M.~Lothaire, editor, {\em Mots}, pages 335--357. Hermes, 1990.

\bibitem{Kitchens98}
B.~P. Kitchens.
\newblock {\em Symbolic Dynamics}.
\newblock Springer, 1998.

\bibitem{PerrinPin04}
D.~Perrin and J.-{\'E}. Pin.
\newblock {\em Infinite Words}.
\newblock Elsevier, 2004.

\bibitem{Sakarovitch09a}
J.~Sakarovitch.
\newblock {\em Elements of Automata Theory}.
\newblock Cambridge University Press, 2009.

\bibitem{Sakarovitch09b}
J.~Sakarovitch.
\newblock Rational and recognisable power series.
\newblock In M.~Droste, W.~Kuich, and H.~Vogler, editors, {\em Handbook of
  Weighted Automata}, chapter~4, pages 105--–174. Springer, 2009.

\bibitem{Schnorr71}
C.~P. Schnorr and H.~Stimm.
\newblock {E}ndliche {A}utomaten und {Z}ufallsfolgen.
\newblock {\em Acta Informatica}, 1:345--359, 1972.

\bibitem{Senata06}
E.~Senata.
\newblock {\em Non-negative Matrices and Markov Chains}.
\newblock Springer, 2006.

\end{thebibliography}

\end{document}